\algnewcommand{\LineComment}[1]{\Statex \(\triangleright\) #1}
\newlength{\defbaselineskip}
\DeclareMathOperator{\poly}{poly}
\DeclareMathOperator{\opt}{OPT}
\DeclareMathOperator{\dom}{domain}
\DeclarePairedDelimiter\ceil{\lceil}{\rceil}
\DeclarePairedDelimiter\dotp{\langle}{\rangle}
\newcommand\myeq{\mathrel{\overset{\makebox[0pt]{\mbox{\normalfont\tiny\sffamily def}}}{=}}}
\newcommand{\maxAi}{\norm[\infty]{A_{:i}}}
\newcommand{\deli}[1]{\nabla_i f_\mu(#1)}
\newcommand{\newx}{x^{(t)}_{k+1}}
\newtheorem{theorem}{Theorem}[section]
\newtheorem{claim}[theorem]{Claim}
\newtheorem{lemma}[theorem]{Lemma}
\theoremstyle{definition}
\newcommand{\argmax}{\operatornamewithlimits{argmax}}
\newcommand{\argmin}{\operatornamewithlimits{argmin}}
\newcommand{\R}{\mathbb{R}}
\newcommand{\E}{\mathbb{E}}
\newcommand{\vone}{\vec{1}}
\newcommand{\norm}[2][]{\ensuremath{\Vert #2 \Vert_{#1}}}
\begin{document}
\title{Faster Parallel Solver for Positive Linear Programs via \\ Dynamically-Bucketed Selective Coordinate Descent}

\date{}
\author{
  Di Wang
  \thanks{
  Department of Electrical Engineering and Computer Sciences,
  University of California at Berkeley,
  Berkeley, CA 94720.
  Email: wangd@eecs.berkeley.edu
  }
  \and
  Michael W. Mahoney
  \thanks{
  International Computer Science Institute
  and Department of Statistics,
  University of California at Berkeley,
  Berkeley, CA 94720.
  Email:  mmahoney@stat.berkeley.edu
  }
  \and
  Nishanth Mohan
  \thanks{
  Department of Electrical Engineering and Computer Sciences,
  University of California at Berkeley,
  Berkeley, CA 94720.
  Email: nishanth.mohan@berkeley.edu
  }
  \and
  Satish Rao
  \thanks{
  Department of Electrical Engineering and Computer Sciences,
  University of California at Berkeley,
  Berkeley, CA 94720.
  Email: satishr@berkeley.edu
  }
}

\maketitle

\begin{abstract}
We provide improved parallel approximation algorithms for the important class of packing and covering linear programs. 
In particular, we present new parallel $\epsilon$-approximate packing and covering solvers which run in $\tilde{O}(1/\epsilon^2)$ expected time, i.e., in expectation they take $\tilde{O}(1/\epsilon^2)$ iterations and they do $\tilde{O}(N/\epsilon^2)$ total work, where $N$ is the size of the constraint matrix and $\epsilon$ is the error parameter, and where the $\tilde{O}$ hides logarithmic factors.
To achieve our improvement, we introduce an algorithmic technique of broader interest: \emph{dynamically-bucketed selective coordinate descent (DB-SCD)}.  
At each step of the iterative optimization algorithm, the DB-SCD method dynamically buckets the coordinates of the gradient into those of roughly equal magnitude, and it updates all the coordinates in one of the buckets.  
This dynamically-bucketed updating permits us to take steps along several coordinates with similar-sized gradients, thereby permitting more appropriate step sizes at each step of the algorithm. 
In particular, this technique allows us to use in a straightforward manner the recent analysis from the breakthrough results of Allen-Zhu and Orecchia~\cite{AO15-parallel} to achieve our still-further improved bounds.
More generally, this method addresses ``interference'' among coordinates, by which we mean the impact of the update of one coordinate on the gradients of other coordinates. 
Such interference is a core issue in parallelizing optimization routines that rely on smoothness properties. 
Since our DB-SCD method reduces interference via updating a selective subset of variables at each iteration, we expect it may also have more general applicability in optimization.
\end{abstract}

\section{Introduction}
\label{sxn:intro}

Packing and covering problems are important classes of linear programs with many applications, and they have long drawn interest in computer science in general and theoretical computer science in particular. 
In their generic form, fractional packing problems can be written as the linear program (LP): 
$$\max_{x\geq 0}\{c^Tx:Ax\leq b\},$$ 
where $c\in \R^n_{\geq 0},b\in \R^m_{\geq 0}$, and $A\in \R^{m\times n}_{\geq 0}$ are all non-negative. 
Without loss of generality, one can scale the coefficients, in which case one can write this LP in the standard form:
\begin{equation}
\label{eq:packLP}
\max_{x\geq 0}\{\vone^Tx:Ax\leq \vone\}  ,
\end{equation}
where $A\in \R^{m\times n}_{\geq 0}$.
The dual of this LP, the fractional covering problem, can be written in the standard form as:
\begin{equation}
\label{eq:covLP}
\min_{y\geq 0}\{\vone^Ty:A^Ty\geq \vone\} .
\end{equation}
We denote by $\opt$ the optimal value of the primal problem~\eqref{eq:packLP} (which is also the optimal value of the dual problem~\eqref{eq:covLP}).
In this case, we say that a vector $x$ is a {\it $(1-\epsilon)$-approximation} for the packing LP if $Ax\leq \vone$ and $\vone^Tx\geq (1-\epsilon)\opt$, and we say that $y$ is a {\it $(1+\epsilon)$-approximation} for the covering LP if $Ay\geq \vone$ and $\vone^Ty\leq (1+\epsilon)\opt$.

In this paper, we describe improved parallel algorithms for packing LPs and covering LPs, improving the dependence on the error parameter $\epsilon$ from $\tilde{O}(1/\epsilon^3)$ to $\tilde{O}(1/\epsilon^2)$ for both the total work and the distributed iteration count for both problems~\eqref{eq:packLP} and~\eqref{eq:covLP}.
Our approach follows the general approach of transforming non-smooth LPs to smooth convex optimization problems and then applying an efficient first-order optimization algorithm. 
Unfortunately, the smoothed objective that arises does \emph{not} have particularly Lipschitz continuity properties, and thus we are unable to use traditional optimization methods to improve the (parallel) convergence rate beyond $\tilde{O}(1/\epsilon^3)$.
Thus, to achieve our improvement to $\tilde{O}(1/\epsilon^2)$, we develop the \emph{dynamically-bucketed selective coordinate descent (DB-SCD)} method. 
This descent method involves partitioning coordinates into buckets based on the magnitudes of their gradients and updating the coordinates in one of the buckets.
This permits us to make relatively large gradient moves along a subset of the coordinates for which we can control the smoothness of gradients within the range of our step. 
Given that controlling the smoothness properties of functions is central to controlling the convergence rate of continuous optimization algorithms, we expect that this method will be useful more~generally.

\subsection{Overview of Prior Methods}

Although one can use general LP solvers such as interior point methods to solve packing and covering problems with a convergence rate of $O(\log(1/\epsilon))$, such algorithms usually have very high per-iteration cost, as methods such as the computation of the Hessian and matrix inversion are involved. 
In the setting of large-scale problems, low precision iterative solvers are often more popular choices. 
Such solvers usually run in time with a much better dependence on the problem size, but they have the much worse $\poly(1/\epsilon)$ dependence on the approximation~parameter. 
Most such work falls into one of two categories. 
The first category follows the approach of transforming LPs to convex optimization problems, then applying efficient first-order optimization algorithms. 
Examples of work in this category include~\cite{Nemirovski04,AHK12,Nesterov05,Renegar14,AO15-parallel,AO15-stochastic}, and all except~\cite{AO15-parallel,AO15-stochastic} apply to more general classes of LPs. 
The second category is based on the Lagrangian relaxation framework, and some examples of work in this category include~\cite{PST91,Fleischer04,LubyN93,Young01,Young14,KY14}.
For a more detailed comparison of this prior work, see Table $1$ in ~\cite{AO15-stochastic}.
Based on whether the running time depends on the width $\rho$, a parameter which typically depends on the dimension or the largest entry of $A$, these algorithms can also be divided into width-dependent solvers and width-independent solvers. 
Width-dependent solvers are usually pseudo-polynomial, as the running time depends on $\rho\opt$, which itself can be large, while width-independent solvers are more efficient, in the sense that they provide truly polynomial-time approximation solvers. 

The line of research associated with width-independent solvers was initiated by Luby and Nisan~\cite{LubyN93}, where the authors gave a parallel algorithm runs in $\tilde{O}\left(1/\epsilon^4\right)$ distributed iterations and  $\tilde{O}\left(N/\epsilon^4\right)$ total work. 
Note that, since we are most interested here in the dependence on the error parameter $\epsilon$, to simplify the discussion and notation, we will follow the standard practice of using $\tilde{O}$ to hide poly-log factors. 
For readers interested in the more precise results, see Table~\ref{tab:running-times} in this section as well as our analysis below. 
For sequential algorithms, on the total work front, a recent breakthrough gives an $\tilde{O}(N/\epsilon)$ sequential algorithm for packing and a different $\tilde{O}(N/\epsilon^{1.5})$ sequential algorithm for covering~\cite{AO15-stochastic}.
Our recent work improved this by developing a diameter reduction method that leads to a unified framework that achieves an $\tilde{O}(N/\epsilon)$ sequential algorithm for both packing and covering~\cite{WRM15}. 
In terms of parallel algorithms, improvement over the $\tilde{O}(1/\epsilon^4)$ iteration count and $\tilde{O}(N/\epsilon^4)$ total work in the original paper of Luby and Nisan~\cite{LubyN93} was only achieved recently. 
In particular, Allen-Zhu and Orecchia~\cite{AO15-parallel} gave a deterministic algorithm with $\tilde{O}(1/\epsilon^3)$ iterations and $\tilde{O}(N/\epsilon^3)$ total work. 

\begin{table}
\begin{center}
\begin{tabular}{l|c|c|}
 & Number of distributed iterations & Total work  \\
\hline \hline
Luby and Nisan~\cite{LubyN93} & $O\left(\frac{\log^2 N}{\epsilon^4}\right)$ & $O\left(\frac{\log^2 N}{\epsilon^4}\times (N\log n)\right)$   \\
Allen-Zhu and Orecchia~\cite{AO15-parallel} & $O\left(\frac{\log^2 N}{\epsilon^3}\right)$ & $O\left(\frac{\log^2 N}{\epsilon^3}\times N\right)$  \\
Our main results   & $O\left(\frac{\log^2 N\log \frac{1}{\epsilon}}{\epsilon^2}\right)$ & $O\left(\frac{\log^2 N\log \frac{1}{\epsilon}}{\epsilon^2}\times N \right)$  \\
\end{tabular}
\caption{Running time for several parallel solvers for packing and covering LP problems. $N$ is the total number of non-zero elements in the constraint matrix. Our algorithm is randomized, so the number of distributed iterations and total work are in expectation.}
\end{center}
\label{tab:running-times}
\end{table}

\subsection{Our Main Results}
\label{sxn:intro-main}

In this paper, we describe improved parallel algorithms for packing LPs and for covering LPs, improving the dependence on the error parameter $\epsilon$ for total both work and distributed iteration count for both problems~\eqref{eq:packLP} and~\eqref{eq:covLP} from $\tilde{O}(1/\epsilon^3)$ to $\tilde{O}(1/\epsilon^2)$.
In particular, we present a stochastic parallel solver that provides a $(1-\epsilon)$-approximation for primal packing LPs of the form~\eqref{eq:packLP}.
The solver is a width-independent first-order method.
It is a stochastic method, and it converges not only in expectation, but also with at least constant probability. 
Furthermore, our solver has the additional guarantee that the objective function value is non-increasing across iterations. In general, stochastic first order methods, such as coordinate descent and stochastic gradient descent, show the expectation of the objective function converges to optimum, without the monotonicity guarantee on the actual objective (e.g.,~\cite{LiuWRBS14,FR13a,RichtarikT14,Nesterov12}). In practice, when the constraints in the problem are ill-conditioned or highly non-separable, the objective function value may fluctuate heavily during execution of stochastic methods, and this has motivated the development of more robust stochastic algorithms (e.g.,~\cite{Kimon14}). 

More precisely, here is our main theorem for the fractional packing problem.
\begin{theorem}
\label{thm:main2}
There is a randomized algorithm that, with probability at least $9/10$, computes a $(1-O(\epsilon))$-approximation to the fractional packing problem, has 
$\tilde{O}(N/\epsilon^2)$ total work, and is implementable in 
$\tilde{O}(1/\epsilon^2)$ 
distributed~iterations.
\end{theorem}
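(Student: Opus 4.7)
The plan is to combine the smoothing framework of~\cite{AO15-parallel} with the DB-SCD mechanism so that each parallel step makes quantifiable primal progress while incurring only controlled cross-coordinate interference. First I would recast the packing LP~\eqref{eq:packLP} by replacing the hard constraint $Ax \leq \vone$ with a smoothed exponential penalty, yielding a regularized concave objective $f_\mu(x) = \vone^T x - p_\mu(Ax)$ with smoothing parameter $\mu = \Theta(\epsilon/\log N)$; the gradient coordinate $\nabla_i f_\mu(x) = 1 - A_{:i}^T q(x)$ then encodes a normalized ``slack'' of coordinate $i$, where $q$ is the vector of exponentials in $Ax$. A crucial property, inherited from~\cite{AO15-parallel}, is that although $f_\mu$ has a global Lipschitz constant of order $1/\mu$, the change along a single coordinate $i$ admits a much sharper local quadratic upper bound whose curvature depends on $\maxAi$ and on the current value of $\nabla_i f_\mu$. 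This asymmetry is precisely what a naive parallel update destroys, since simultaneously updating coordinates with very different gradient magnitudes forces a step size dictated by the global constant.

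The DB-SCD move exploits this asymmetry. At iteration $t$, I would compute all gradients, discard coordinates whose $\nabla_i f_\mu(x^{(t)})$ falls below an $\epsilon$-level threshold, dynamically partition the remaining active coordinates into $O(\log(1/\epsilon))$ buckets by the magnitude of $\maxAi \cdot \nabla_i f_\mu$, sample one bucket (roughly uniformly), and update every coordinate in that bucket in parallel by a common multiplicative step. The key lemma I would prove is a bucketed analogue of the Allen-Zhu/Orecchia one-coordinate descent inequality: because coordinates inside a bucket have comparable gradient magnitudes and comparable local smoothness, the in-bucket second-order interference contributes only a constant-factor overhead, so a single bucket update increases $f_\mu$ by at least $\Omega(\epsilon \cdot G_t / \log(1/\epsilon))$, where $G_t$ is the aggregate gradient contribution of the sampled bucket. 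Since the sum of bucket contributions dominates the duality gap $\opt - f_\mu(x^{(t)})$, a random bucket captures an $\Omega(1/\log(1/\epsilon))$ fraction of the available descent in expectation, giving expected $\tilde{O}(1/\epsilon^2)$ iterations to reach $f_\mu(x) \geq (1-O(\epsilon))\opt$.

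To translate this into total work and parallel depth, each iteration can be implemented in $\tilde{O}(N)$ work and $O(\log N)$ depth: gradient computation is a sparse matrix-vector product, bucket assignment is a comparison and prefix-sum, and the synchronized multiplicative update is coordinate-local. Combined with the iteration bound, this yields $\tilde{O}(N/\epsilon^2)$ expected total work and $\tilde{O}(1/\epsilon^2)$ expected parallel depth. To extract a feasible $(1-O(\epsilon))$-approximate primal, I would rescale the final iterate by $\max_j (Ax)_j$; the smoothing choice $\mu = \Theta(\epsilon/\log N)$ then guarantees via a standard argument that the rescaling loses only an $O(\epsilon)$ factor. Monotonicity of $f_\mu$ along the trajectory is immediate because the per-bucket step size stays inside the local smoothness radius. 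The $9/10$ success probability follows by a Markov-style truncation on the expected iteration count together with, if needed, a constant number of independent restarts.

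The main obstacle I anticipate is establishing the bucketed descent inequality with the right constants. The global Lipschitz bound for $f_\mu$ scales as $\tilde{\Theta}(1/\epsilon)$, which forces step sizes of order $\tilde{O}(\epsilon)$ and only buys the $\tilde{O}(1/\epsilon^3)$ rate of~\cite{AO15-parallel}. Showing that restricting attention to one magnitude bucket enables a step size that is $\Theta(1)$ relative to the bucket's gradient norm --- equivalently, that the cross-coordinate quadratic error terms within a bucket essentially telescope rather than accumulate --- is precisely where the $\epsilon$-improvement is purchased, and it is the technical core of what DB-SCD must deliver.
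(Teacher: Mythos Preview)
Your plan captures the bucketing idea and the smoothing setup correctly, but the convergence argument you sketch is a pure sufficient-decrease (gradient descent) argument, and that is where the proposal breaks down. You claim that a single bucket update improves $f_\mu$ by $\Omega(\epsilon\cdot G_t/\log(1/\epsilon))$ and that the bucket contributions $G_t$ sum to at least the duality gap. But the gap is $f_\mu(x_k)-f_\mu(u^*)\le\langle\nabla f_\mu(x_k),x_k-u^*\rangle$, and the $-u^*$ part of this inner product cannot be controlled by the function decrease $f_\mu(x_k)-f_\mu(x_{k+1})$ alone: the negative-gradient coordinates contribute $-\nabla_i f_\mu\cdot u^*[i]$ terms that can add up to $\Theta(\opt)$ and are invisible to a descent-only analysis. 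The paper handles exactly this by interpreting the \emph{same} multiplicative update as a mirror-descent step with the generalized entropy Bregman divergence, obtaining $\langle\alpha\xi_k^{(t)},x_k-u^*\rangle\le\alpha^2\opt+V_{x_k}(u^*)-V_{x_{k+1}}(u^*)$; it is this telescoping potential, not a per-step decrease bound, that absorbs the $u^*$ dependence. The gradient-descent interpretation is used only to charge the \emph{truncated} part $\eta_k$ of the gradient (the coordinates with $\nabla_i f_\mu>1$) against the actual function decrease. Without the mirror-descent half of the coupling, your argument has no mechanism to bound $\langle\nabla f_\mu(x_k),-u^*\rangle$, and the iteration bound does not follow.

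Two smaller points. First, the step size after bucketing is still $\alpha=\mu/20=\tilde\Theta(\epsilon)$, not $\Theta(1)$; the $1/\epsilon$ speedup over~\cite{AO15-parallel} comes from the fact that \emph{their} smoothness lemma forced $\alpha=\tilde\Theta(\epsilon^2)$ when all coordinates move at once, whereas bucketing lets the paper's Lemma~\ref{lemma:Lipschitz} admit $\alpha=\tilde\Theta(\epsilon)$. This larger $\alpha$ then enters the mirror-descent regret $T\approx V_{x_0}(u^*)/(\alpha\epsilon)$ to give $T=\tilde O(1/\epsilon^2)$. Second, the bucketing criterion in the paper is simply the magnitude of the truncated gradient $\xi_k[i]$ (see~\eqref{eq:groups}), not $\maxAi\cdot\nabla_i f_\mu$; the $\maxAi$ weighting plays no role in the Lipschitz lemma because the relevant control is on $(Ax)_j$ uniformly, which already absorbs the column norms.
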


\noindent
In addition to this result for the primal packing problem, as given by~\eqref{eq:packLP}, we can use the primal packing solver to get a $(1+\epsilon)$-approximation to the dual covering problem, as given by~\eqref{eq:covLP}.
Here is our main theorem for the fractional covering~problem.
\begin{theorem}
\label{thm:dual2}
There is a randomized algorithm that, with probability at least $9/10$, computes a $(1+O(\epsilon))$-approximation to the fractional covering problem, has 
$\tilde{O}(N/\epsilon^2)$ total work, and is implementable in 
$\tilde{O}(1/\epsilon^2)$ 
distributed~iterations.
\end{theorem}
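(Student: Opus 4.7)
The plan is to derive the covering approximation from the packing solver of Theorem~\ref{thm:main2} via (approximate) LP duality, using the dual variables implicit in the gradient of the smoothed packing objective. Concretely, the smoothed surrogate used to drive the primal solver has the generic form $f_\mu(x) = \vone^T x - \tfrac{1}{\mu}\log\!\bigl(\sum_i e^{\mu(A_i x - 1)}\bigr)$ (or a variant thereof), and its partial derivative with respect to $x_j$ is $1 - \sum_i p_i(x) A_{ij}$, where $p_i(x) = e^{\mu(A_i x - 1)} / \sum_k e^{\mu(A_k x - 1)}$. The vector $p(x)$ is a probability distribution over constraints, and a standard rescaling $y := \opt \cdot p(x)$ (or using the total mass under the unnormalized exponential weights) will be our candidate covering solution.

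First I would invoke Theorem~\ref{thm:main2} to obtain, with probability at least $9/10$, a primal vector $x$ with $\vone^T x \geq (1 - O(\epsilon))\opt$ and $Ax \leq \vone$, together with a primal-dual gap certificate of size $O(\epsilon)\opt$ on the smoothed objective. From this near-optimality I would read off two properties of the induced $y$: (i) $\vone^T y \leq (1 + O(\epsilon))\opt$, because the mass assigned by $p(x)$ to constraints is directly controlled by the smoothed objective value and, after scaling, gives a multiplicative bound on $\opt$; and (ii) for every coordinate $j$, $(A^T y)_j \geq (1 - O(\epsilon))$, because the small gradient norm at an approximately optimal $x$ means $\sum_i p_i(x) A_{ij}$ is close to $1$ on the coordinates that actually matter, and the non-negativity of $A$ lets us extend this to all $j$ (handling low-gradient coordinates by a thresholding/scaling argument). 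Then dividing $y$ by $(1 - O(\epsilon))$ restores exact feasibility $A^T y \geq \vone$ while keeping $\vone^T y \leq (1 + O(\epsilon))\opt$, absorbing the extra factor into the $O(\epsilon)$ slack.

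Next I would check that the construction of $y$ respects the claimed complexity. Given $x$, computing the unnormalized exponential weights $e^{\mu(A_i x - 1)}$ requires one parallel pass over the nonzeros of $A$, i.e.\ $O(N)$ work in $O(\log N)$ depth, and the subsequent normalization and scaling are trivially parallel. Since these constitute at most polylogarithmic additional overhead on top of the $\tilde{O}(N/\epsilon^2)$ work and $\tilde{O}(1/\epsilon^2)$ iterations already used to produce $x$, the final algorithm retains the same asymptotic bounds. The success probability remains $\geq 9/10$ since the only randomness comes from the underlying primal solver.

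The main obstacle I anticipate is step (ii) above: establishing dual feasibility from the fact that $x$ is only approximately optimal for a smoothed (not the true) objective. The gradient-norm argument gives control over a weighted average $\sum_i p_i(x) A_{ij}$ only at coordinates where $x_j$ is ``active,'' and at inactive coordinates one must instead argue that the exponential weights $p_i(x)$ already assign enough mass to constraints containing $j$ — this typically uses the structure of the smoothing parameter $\mu = \Theta(\log(N/\epsilon)/\epsilon)$ and the fact that $Ax \leq 1 + O(\epsilon)$ entrywise up to smoothing slack. Controlling these ``boundary'' coordinates carefully, while keeping all error slacks bundled inside a single $O(\epsilon)$, is where the bookkeeping is delicate, but it follows the same template used by Allen-Zhu and Orecchia~\cite{AO15-parallel} for their packing-to-covering reduction, and the DB-SCD structure does not alter this translation.
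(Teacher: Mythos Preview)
Your proposal has a genuine gap: you construct the dual candidate $y$ from the penalty vector $p(x)$ at the \emph{final} iterate $x_T$ alone, but near-optimality of $f_\mu(x_T)$ does not control the individual gradients $\nabla_j f_\mu(x_T) = (A^T p(x_T))_j - 1$. At the exact minimizer the KKT conditions force $(A^T p(x^\star))_j \ge 1$, but at an $O(\epsilon)\opt$-suboptimal point a coordinate $j$ with $x_T[j]$ small can have $(A^T p(x_T))_j$ bounded away from $1$ by a constant. Your suggested fix for ``inactive'' coordinates---that the exponential weights must already put enough mass on constraints containing $j$---has no basis: nothing in the objective-value bound or in the structure of $\mu$ forces this.

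The paper (and Allen-Zhu--Orecchia~\cite{AO15-parallel}, which you cite as the template) instead takes $\bar y = \tfrac{1}{T}\sum_{k=0}^{T-1} p(x_k)$, the \emph{average} over the whole trajectory. Approximate feasibility then comes not from a gradient-norm bound but from the fact that $\sum_k \xi_k^{(t)}[i]$ is bounded because the multiplicative updates cannot push $x[i]$ outside a polynomially bounded range; this telescoping is exactly what turns the regret/mirror-descent analysis into a dual feasibility statement. Moreover, because DB-SCD updates only a random bucket each iteration, the sum of the \emph{actual} gradients $\sum_k \nabla_i f_\mu(x_k)$ differs from the sum of the \emph{applied} truncated gradients by a martingale, and the paper needs Azuma's inequality (Lemma~\ref{lemma:martingale}) to bound this gap with high probability, followed by an explicit repair step (Algorithm~\ref{alg:fix}) for the rare violated constraints. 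Your proposal treats the primal solver as a black box producing only $x_T$ and asserts ``the only randomness comes from the underlying primal solver,'' thereby missing both the averaging and the concentration argument that are essential here.
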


That is, our packing solver and our covering solver have $\tilde{O}(1/\epsilon^2)$ expected iterations and $\tilde{O}(N/\epsilon^2)$ expected total work. 
Among other things, this gives an expected improvement of $\tilde{O}(1/\epsilon)$ over the current fastest parallel algorithm of Allen-Zhu and Orecchia~\cite{AO15-parallel}, in terms of both iteration count and total work.
See Table~\ref{tab:running-times} for a more detailed comparison with their results as well as the results of Luby and Nisan~\cite{LubyN93}.
See also Section~\ref{sxn:main-results}, which contains a more detailed statement of these results as well as the algorithms that achieve these results.

\subsection{Our Main Techniques}
\label{sxn:intro-techniques}
 
The general approach of transforming non-smooth LPs into smooth convex optimization problems and then applying efficient first-order optimization algorithm has been done by Nesterov~\cite{Nesterov05}, Allen-Zhu and Orecchia~\cite{AO15-parallel,AO15-stochastic}, and many others.
In particular, following~\cite{AO15-parallel,AO15-stochastic}, to find a $(1-\epsilon)$-approximation of a packing LP, we will approximately minimize, over the region $x\geq \vec{0}$, the following convex function:
\begin{equation}
\label{eqn:smoothing}
f_\mu(x)=-\vone^Tx+\mu \sum_{j=1}^m \exp(\frac{1}{\mu}((Ax)_j)-1) .
\end{equation}
In Section~\ref{sxn:solver-prelim}, we will discuss the motivation of using $f_\mu(x)$ to solve our packing LP as well as properties of the parameter $\mu$ and the function $f_\mu(x)$. 
Here, we will focus on the main techniques we had to introduce in order to obtain our improved solver.


To do so, recall that first-order methods in optimization exploit the idea of using the negative of the gradient as the descent direction for each step of an iterative algorithm.
In order to lower bound the improvement of successive steps of the algorithm, smoothness is at the core of the analysis.
The reason is basically since we want to move in the descent direction as far as possible, without changing the gradient by too much. 
This is most commonly captured by proving the Lipschitz continuity property with parameter $L$ on the function $f$, i.e., we want to find an $L\in\mathbb{R}^{+}$ such that
\begin{equation}
\label{eq:LipschitzDef}
\norm{\nabla f(x)-\nabla f(y)}^*\leq L\norm{x-y} \quad \forall x,y\in \dom(f)   .
\end{equation}
While popular, the Lipschitz continuity property is often much stronger than necessary.
For example, instead of controlling the properties of the gradient for all $x,y$ pairs, in most gradient based methods, we actually only need to show the smoothness of gradients within the range of our step---which, by design, we can control (indeed, step sizes are often chosen based on this). 
This motivates the use of weaker continuity properties by focusing on more constrained cases of $x,y$ pairs that are still sufficient to lower bound the improvement of successive steps of the algorithm.

Our basic improvement with the DB-SCD method comes from updating only a carefully-chosen subset of variables in each iteration. 
In particular, we bucket the coordinates of the gradient into buckets of roughly equal gradient magnitude, according to~\eqref{eq:groups} below, and then we update the coordinates in one of the buckets, according to Step~\ref{step:selective-update} of Algorithm~\ref{alg:PP} below.
While our particular approach is novel, the basic idea is hardly new to optimization. 
Indeed, for most non-trivial functions, variables ``interfere'' with each other, in the sense that variable $i$'s gradient will be affected by the update of variable $j$ (e.g.,~\cite{BradleyKBG11}). 
Thus, if we aim to move the variables while maintaining smoothness of the gradients, we have to take interference into consideration. 
In general, this limits the possible step size. The global Lipschitz continuity parameter usually suffers from interference, since the Lipschitz property~\eqref{eq:LipschitzDef} needs to hold for all $x,y$ pairs, which allows arbitrary interference. 

One way to alleviate the problem of interference is to update fewer variables in each iteration.
This potentially permits better control over the changes of gradients for the updated variables, since for the variables not updated, the changes of their gradients don't affect the objective improvement for that iteration. 
One extreme of this idea is the coordinate descent method~\cite{Wright15_CD}, where in each iteration only one variable is updated. 
In this case, the step length of the update on the single variable is often larger than the step length when all variables are moved simultaneously. 
On the other hand, in most cases the computation of $n$ successive partial derivatives can be more expensive than the computation of all the $n$ partial derivatives for a fixed $x$, limiting the applicability of the coordinate descent method. 
When the tradeoff is good between the gain in the step length versus the loss in the computation, coordinate descent can be better than gradient descent in terms of total work (\cite{LeeS13_CD,AO15-stochastic}). 
More generally, in the context of solving linear systems, we have the example of Jacobi iterations versus Gauss-Seidel iterations, and similar tradeoffs between interference and running time arise (\cite{Amodio:1995,BertsekasT91}, Chapter $4$ of~\cite{saad2003iterative}).
Still more generally, various efforts to parallelize coordinate descent can be seen as explorations of tradeoffs among the smoothness parameter, the amount of computation, as well as the distributed iteration count (\cite{FR13a,RichtarikT12,RichtarikT14,BradleyKBG11}). 

To the best of our knowledge, all such works mentioned exhibit an inverse relationship between the number of variables updated each iteration, and the number of total iterations required, i.e., when fewer variables are updated, then more iterative steps are needed. 
This is what one would naturally expect. 
Moreover, the prior works mentioned mostly choose the subset of variables to update either by some fix order, e.g., uniformly at random, or according to some static partition constructed from the sparsity structure of the given instance, e.g., the objective function is separable or the matrix in the problem is block diagonal (~\cite{TsengY09}).
Rarely, if at all, is a subset of variables chosen dynamically using knowledge of the actual values of the gradients. 
Again, this is what one would naturally expect. 
For example, as Nesterov wrote in his seminal accelerated coordinate descent work~\cite{Nesterov12}, if one already computed the whole gradient, then full-gradient methods seem to be better options. 

With respect to both of these considerations, our method of selective coordinate descent is novel and quite different.
First, at least for the case of packing and covering LPs, we can achieve better parallel running time and better total work by updating fewer (carefully-selected) variables each iteration. 
Second, our work shows that the extra computation of the whole gradient can help us select a better subset dynamically (even if we don't update all coordinates).
Our results in this paper show that both of these directions are worth additional exploration. 
Finally, we emphasize that a less obvious benefit of our approach is that the gradients in most cases contain useful information about the dual problem, and if we have the whole gradients from all iterations, then we can exploit the primal-dual structure of the packing problem to obtain a solution to the covering problem.


\section{Faster Parallel Solver for Packing LPs and Covering LPs}
\label{sxn:main-results}

In this section, we will present our main results, including a statement of our main algorithm and theorem for a parallel solver for packing LPs (in Section~\ref{sxn:main-results-packing}), a statement of our main algorithm and theorem for a parallel solver for covering LPs (in Section~\ref{sxn:main-results-covering}), and a description of the main technical ideas underlying our proofs (in Section~\ref{sxn:main-results-ideas}).

\subsection{Algorithm and Main Theorems for Parallel Packing LP Solver}
\label{sxn:main-results-packing}

Our main algorithm to find a $(1-\epsilon)$-approximation of a packing LPs is specified in Algorithm~\ref{alg:PP}. 
This algorithm takes as input the matrix $A$ as in~\eqref{eq:packLP}, the smoothed objective function $f_\mu$, and the approximation parameter $\epsilon$. It returns as output $x_T$, such that with constant probability $\frac{1}{1+\epsilon}x_T$ is a $(1-O(\epsilon))$-approximation to the packing problem.
In this algorithm, we use $x[i]$ to denote the $i$-th coordinate of vector $x$, except with matrix-vector multiplications, where we use $(Ax)_i$ to denote the value of the $i$-th component of $Ax$.

\begin{algorithm}
\caption{Stochastic and Parallelizable Packing LP Solver}
\label{alg:PP}
\textbf{Input}: $A\in \R^{m\times n}_{\geq 0},f_\mu,\epsilon\in (0,1/2]$
\textbf{Output}: $x\in \R^n_{\geq 0}$
\begin{algorithmic}[1]
\State $\mu\leftarrow\frac{\epsilon}{4\log(nm/\epsilon)}, \alpha \leftarrow\frac{\mu}{20}$
\State $T\leftarrow\frac{\ceil*{10\log(1/\epsilon)\log(2n)}}{\alpha\epsilon}=\tilde{O}(\frac{1}{\epsilon^2}), w\leftarrow \ceil*{\log(\frac{1}{\epsilon})}$
\State $x_0\leftarrow \frac{1-\epsilon/2}{n\maxAi}$
\For{$k=0$ to $T-1$}
\State Select $t\in \{0,\ldots,w-1 \}$ uniformly at random
\For{$i=1$ to $n$}
\State Gradient truncation and coordinate selection: Compute $\nabla_i f_\mu(x_k)$, and get $\xi^{(t)}_k[i]$, as defined in~\eqref{eq:groups}
\label{step:selective-update}
\State Update step: $x_{k+1}[i]\leftarrow \newx[i]\myeq x_k[i]\exp(-\alpha \xi^{(t)}_k[i])$
\EndFor
\EndFor
\State \Return $x_T$.
\end{algorithmic}
\end{algorithm}

To understand the steps of Algorithm~\ref{alg:PP}, recall that the function $f_\mu(x)$ referred to in the algorithm is given by Eqn.~(\ref{eqn:smoothing}) and is a smoothed version of the packing objective of~\eqref{eq:packLP}.
Then, following~\cite{AO15-parallel}, in each iteration $k \in[0,T-1]$, for each variable $x_k[i]$, we can break the gradients $\deli{x_k}$ into small, medium, and large components.
That is, we can let
%
%
\begin{eqnarray}
\zeta_k[i] &=& \left\{\begin{array}{lr}
\nabla_if_\mu(x_k) & \nabla_if_\mu(x_k)\in [-\epsilon,\epsilon]\\
0 & \text{otherwise} \end{array} \right.
\\
\xi_k[i] &=& \left\{\begin{array}{lr}
0 & \nabla_if_\mu(x_k)\in [-\epsilon,\epsilon]\\
\nabla_if_\mu(x_k) & \nabla_if_\mu(x_k)\in [-1,1]\backslash [-\epsilon,\epsilon]\\
1 & \nabla_if_\mu(x_k)>1 \end{array} \right.
\\
\eta_k[i] &=& \left\{\begin{array}{lr}
\nabla_if_\mu(x_k)-1 & \nabla_if_\mu(x_k)>1\\
0 & \text{otherwise} \end{array} \right.
\end{eqnarray}
denote, respectively, the small, medium, and large components of the gradient.
In particular, from this decomposition, we have
\begin{equation}
\label{eq:smallLarge}
\nabla f_\mu(x_k)=\zeta_k+\xi_k+\eta_k  .
\end{equation}
(It was by adopting this partitioning that previous work achieved their $\tilde{O}(1/\epsilon^3)$ running time~\cite{AO15-parallel}.)

In Lemma~\ref{lemma:Lipschitz} below, we will establish that if the gradients are all within a factor of (say) $2$ from each other, then we can take a multiplicative step with step size $\alpha=\Theta(\mu)=\tilde{O}(\epsilon)$. 
To exploit this algorithmically, and to lead to our improved algorithmic results, we will further partition the variables into groups such that, for variables in the same group, the absolute values of their truncated gradients will be within a factor $2$ of each other. 
In particular, we will further partition the medium component into groups or buckets in which the truncated gradients are of roughly equal magnnitude.
To do this, for $t\in\{0,\ldots,\ceil{\log(\frac{1}{\epsilon})-1}\}$, we let
\begin{equation}
\label{eq:groups}
\xi^{(t)}_k[i]= \left\{\begin{array}{lr}
\xi_k[i] & \xi_k[i] \in (\epsilon 2^t,\epsilon 2^{t+1}]\\
& \cup [-\epsilon 2^{t+1},-\epsilon 2^t)\\
0 & \text{otherwise} \end{array} \right.\quad\mbox{and}\quad
\eta^{(t)}_k[i] = \left\{\begin{array}{lr}
\eta_k[i] & t=\ceil{\log(\frac{1}{\epsilon})}-1\\
0 & \text{otherwise} \end{array} \right.  .
\end{equation}
Then, in each iteration of Algorithm~\ref{alg:PP}, we will pick a bucket $t$ uniformly at random, and we will update all variables using $\xi_k^{(t)}$. 

Our main result for Algorithm~\ref{alg:PP} is summarized in the following theorem, the proof of which we will present in Section~\ref{sxn:analysis}.

\begin{theorem}
\label{thm:main}
Algorithm~\ref{alg:PP} outputs $x_T$ satisfying $\E[f_\mu(x_T)]\leq -(1-5\epsilon)\opt$, and the algorithm can be implemented with $O(\frac{\log(1/\epsilon)\log^2N}{\epsilon^2})$ iterations with total work $O(N\times \frac{\log(1/\epsilon)\log^2N}{\epsilon^2})$, where $N$ is the total number of non-zeros in $A$.
\end{theorem}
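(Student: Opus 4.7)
The plan is to prove correctness --- the bound $\mathbb{E}[f_\mu(x_T)] \leq -(1-5\epsilon)\opt$ --- by establishing a per-iteration expected-decrease inequality and iterating it to obtain geometric convergence of the optimality gap. The iteration count and total-work bounds will follow almost immediately from the parameter choices: with $\mu = \Theta(\epsilon/\log(nm/\epsilon))$ and $\alpha = \mu/20$, the setting $T = \lceil 10\log(1/\epsilon)\log(2n)\rceil/(\alpha\epsilon)$ is directly $O(\log(1/\epsilon)\log^2 N/\epsilon^2)$, and each outer iteration is dominated by one evaluation of $\nabla f_\mu(x_k)$, which reduces to a matrix-vector product with $A$ (work $O(N)$, parallel depth $O(\log N)$) together with $m$ coordinate-wise exponentials. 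The real task is therefore to prove a recursion of the shape
\[
\mathbb{E}_t[f_\mu(x_{k+1})] + (1-\epsilon)\opt \;\leq\; \Big(1-\tfrac{\alpha\epsilon}{w}\Big)\bigl(f_\mu(x_k) + (1-\epsilon)\opt\bigr) + O(\alpha\epsilon^2\opt),
\]
whose $T$-fold iteration shrinks the initial gap $f_\mu(x_0)+(1-\epsilon)\opt = O(\opt)$ down to $O(\epsilon\opt)$, which after rearranging gives the claimed bound.

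The per-step decrease is the heart of the argument, and I would establish it via the local smoothness property promised by Lemma~\ref{lemma:Lipschitz}. Because bucket $t$ selects only coordinates with $|\xi^{(t)}_k[i]| \in (\epsilon 2^t,\epsilon 2^{t+1}]$, the truncated gradients of all updated coordinates agree to within a factor of $2$, and the individual step sizes satisfy $\alpha|\xi^{(t)}_k[i]| = O(\mu)$ uniformly in $i$. A short calculation shows $(Ax_{k+1})_j - (Ax_k)_j = O(\mu)$ for every $j$, so each exponential $\exp(((Ax)_j-1)/\mu)$ is perturbed by at most a bounded constant factor. This is precisely what is needed to Taylor-expand $f_\mu$ about $x_k$ and absorb the quadratic remainder into a constant fraction of the linear term, producing a single-bucket inequality of the form
\[
f_\mu(x_{k+1}) - f_\mu(x_k) \;\leq\; -\Omega(\alpha)\sum_{i} x_k[i]\,\xi^{(t)}_k[i]\,\nabla_i f_\mu(x_k).
\]
It is exactly the bucketing that makes this relatively large step size $\alpha = \Theta(\mu)$ viable: without it, the mixture inside $\xi_k$ of small truncated gradients (of size $\sim\epsilon$) and near-unit truncated gradients would force a step size smaller by a factor of $1/\epsilon$, which is what produces the $\tilde{O}(1/\epsilon^3)$ rate of~\cite{AO15-parallel}.

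Averaging over the uniformly random bucket $t \in \{0,\ldots,w-1\}$ uses the identity $\xi_k = \sum_{t=0}^{w-1}\xi^{(t)}_k$, so the expected first-order progress is $1/w$ of what the full medium component $\xi_k$ would yield. The small component $\zeta_k$ with $|\zeta_k[i]|\leq\epsilon$ contributes only an $O(\epsilon\opt)$ additive error over the entire run, and the large component $\eta_k$, conservatively truncated at $1$ inside $\xi_k$, never hurts the descent --- both observations follow the standard treatment of~\cite{AO15-parallel}. Comparing $x_k$ against a slightly-shrunken optimum $(1-\epsilon)x^*$ and exploiting the multiplicative/entropic form of the update then converts this first-order progress into the gap recursion stated above. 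The main obstacle I anticipate is the per-step smoothness calculation itself: the constants must be tight enough that the $1/w = \Theta(1/\log(1/\epsilon))$ expected-value loss from random bucket selection is more than compensated by the factor-$1/\epsilon$ larger step size unlocked by bucketing, so that the net per-iteration expected progress is $\tilde{\Theta}(\epsilon^2)$ and the total iteration count lands at $\tilde{O}(1/\epsilon^2)$, an improvement of $1/\epsilon$ over~\cite{AO15-parallel}.
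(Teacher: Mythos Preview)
Your proposal correctly identifies the local smoothness property (the bucketing ensures gradients of updated coordinates are within a constant factor of each other, which is exactly Lemma~\ref{lemma:Lipschitz}) and the resulting per-bucket descent inequality
\[
f_\mu(x_k)-f_\mu(x_{k+1}^{(t)})\;\geq\;\Omega(\alpha)\sum_i x_k[i]\,\xi_k^{(t)}[i]\,\nabla_i f_\mu(x_k).
\]
This part matches the paper's gradient-descent interpretation (Lemma~\ref{lemma:gradient}). However, the step from this descent inequality to your proposed geometric recursion
\[
\E_t[f_\mu(x_{k+1})]+(1-\epsilon)\opt\;\leq\;\Bigl(1-\tfrac{\alpha\epsilon}{w}\Bigr)\bigl(f_\mu(x_k)+(1-\epsilon)\opt\bigr)+O(\alpha\epsilon^2\opt)
\]
does not go through, and this is where your plan has a genuine gap. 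The per-step decrease in $f_\mu$ is proportional to $\sum_i x_k[i](\xi_k[i])^2$, whereas the optimality gap $f_\mu(x_k)-f_\mu(u^*)\leq\langle\nabla f_\mu(x_k),x_k-u^*\rangle$ contains terms of the form $-\xi_k[i]\,u^*[i]$. Consider a coordinate $i$ with $\nabla_i f_\mu(x_k)\approx -1$ but $x_k[i]$ still near its tiny initial value $x_0[i]=\Theta(1/(n\|A_{:i}\|_\infty))$, while $u^*[i]=\Theta(1)$. The contribution of this coordinate to the gap is $\Theta(1)$, yet its contribution to the descent is only $\Theta(\alpha x_k[i])=\Theta(\alpha/n)$. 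No rate $\rho$ independent of $n$ can make the recursion hold in this regime. The sentence ``exploiting the multiplicative/entropic form of the update then converts this first-order progress into the gap recursion'' papers over exactly this difficulty.

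What the paper does instead is interpret the \emph{same} update as a mirror-descent step with the generalized-entropy Bregman divergence $V_x(y)=\sum_i y[i]\log(y[i]/x[i])+x[i]-y[i]$, yielding (Lemma~\ref{lemma:mirror})
\[
\langle\alpha\xi_k^{(t)},x_k-u^*\rangle\;\leq\;\alpha^2\opt+V_{x_k}(u^*)-V_{x_{k+1}^{(t)}}(u^*).
\]
This is the missing ingredient: the $-\xi_k[i]\,u^*[i]$ terms are absorbed into the telescoping potential $V_{x_k}(u^*)$, whose initial value $V_{x_0}(u^*)\leq 2\log(2n)\opt$ supplies precisely the $\log n$ budget needed for the multiplicative updates to grow small coordinates up to the scale of $u^*$. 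The paper then combines the gradient-descent bound (on $\langle\eta_k^{(t)},x_k-u^*\rangle$), the mirror-descent bound (on $\langle\xi_k^{(t)},x_k-u^*\rangle$), and the trivial bound on $\langle\zeta_k,x_k-u^*\rangle$, telescopes over $k$, and finishes by contradiction rather than by a contraction recursion. Your descent inequality is one of the two interpretations the paper needs; the mirror-descent interpretation is the other, and without it the argument does not close.
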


\noindent
\textbf{Remark.}
Our main Theorem~\ref{thm:main2} follows almost immediately from Theorem~\ref{thm:main}, when combined with a standard application of Markov bound and part $(5)$ of Lemma~\ref{lemma:smoothing_properties} below.
In particular, by Lemma~\ref{lemma:smoothing_properties}$(2)$, for every $x\geq 0$, $f_\mu(x)\geq -(1+\epsilon)\opt$. 
From Theorem~\ref{thm:main}, we have that $f_\mu(x_T)+(1+\epsilon)\opt$ is a non-negative random variable with expectation at most $4\epsilon$. 
Using Markov's inequality, with at least probability $9/10$, we have $f_\mu(x_T)\leq -(1-41\epsilon)\opt$, giving a $(1-O(\epsilon)$ approximation by Lemma~\ref{lemma:smoothing_properties}$(5)$. 
The total work and iteration count follow directly from Theorem~\ref{thm:main}, thus proving our main Theorem~\ref{thm:main2}.

\subsection{Algorithm and Main Theorems for Parallel Covering LP Solver}
\label{sxn:main-results-covering}

A benefit of computing all the gradients in each iteration of Algorithm~\ref{alg:PP}  is that---even if we don't use them for our packing solver---we can exploit the same primal-dual structure as in~\cite{AO15-parallel} to get a covering LP solver. 
In particular, given a covering LP instance in the form of~\eqref{eq:covLP}, we can construct its dual, which is a packing LP. 
If we then run Algorithm~\ref{alg:PP} on the packing instance for $T$ iterations, then the average of the exponential penalties used in the computation of gradients, i.e.,  
\begin{equation}
\label{eq:dualAve}
\bar{y}=\frac{1}{T}\sum_{k=0}^{T-1}\overrightarrow{p(x_k)}\geq 0  ,
\end{equation}
will, with constant probability, be a $(1+\epsilon)$-approximation of the covering problem, after some simple fixing step. The fixing step is to post-processing $\bar{y}$ to enforce feasibility of the covering solution, as $\bar{y}$ will only be feasible with high probability. Here $\overrightarrow{p(x_k)}$ is the vector of all the exponential penalties of the packing constraints, as defined in Lemma~\ref{lemma:explicitFunction}, which we compute in each iteration of Algorithm~\ref{alg:PP} to get the gradients.
To obtain the dual solution, the primal-dual property we will exploit is that the slackness of the $i$-th covering constraint with $\bar{y}$ is the average gradient of the $i$-th variable in the primal packing LP:
\[
(A^T\bar{y})_i-1=\frac{1}{T}\sum_{k=0}^{T-1}(A^T\overrightarrow{p(x_k)})_i-1=\frac{1}{T}\sum_{k=0}^{T-1}\deli{x_k}  .
\]
Following a similar approch as in~\cite{AO15-parallel,ZhuLO15a}, we present our fixing procedure in Algorithm~\ref{alg:fix}.
Note, in particular, that this explicitly makes all the dual constraints satisfied.

\begin{algorithm}
\caption{Post-processing $\bar{y}$ to enforce feasibility of the dual solution for the Covering LP solver}
\label{alg:fix}
\textbf{Input}: $A\in \R^{m\times n}_{\geq 0},\epsilon\in (0,1/10],\bar{y}\in R^{m}_{\geq 0}$
\textbf{Output}: $y\in \R^m_{\geq 0}$ such that $A^Ty\geq \vone$.
\begin{algorithmic}[1]
\State $\bar{y}'\leftarrow \bar{y}$
\ForAll{$i$ such that $\lambda_i\myeq (A^T\bar{y})_i-1+\epsilon\leq -2\epsilon$}
\State Let $j=\argmax_j' A_{i,j'}$, i.e. $A_{i,j}=\maxAi$.
\State $\bar{y}'_j\leftarrow \bar{y}'_j + \frac{-\lambda_i}{A_{i,j}}$.
\EndFor
\State \Return $y=\frac{\bar{y}'}{1-3\epsilon}$.
\end{algorithmic}
\end{algorithm}

Overall, given a covering LP instance in the form of~\eqref{eq:covLP}, the entire covering solver consists of the following.
\begin{itemize}
\item
First, construct its dual, which is a packing LP in the form of~\eqref{eq:packLP}. 
\item
Then, run Algorithm~\ref{alg:PP} for $T$ iterations, with $T\geq \max\{\frac{6w}{\alpha\epsilon},\frac{2w^2\log\frac{n}{\epsilon}}{\epsilon^2}\}$.
\item
Finally, fix the $\bar{y}$ as in~\eqref{eq:dualAve} with Algorithm~\ref{alg:fix}, which takes $O(\log(N))$ time and $O(N)$ work. 
\end{itemize}
If $y$ is the output of Algorithm~\ref{alg:fix}, then $y$ is feasible by construction, i.e., $y\geq 0, A^Ty\geq \vone$, and moreover we can establish the following result.

\begin{theorem}
\label{thm:dual}
$\E[\vone^Ty]\leq (1+10\epsilon)\opt$, and $y\geq 0, A^Ty\geq \vone$.
\end{theorem}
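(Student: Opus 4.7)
Feasibility of $y$ is immediate by construction of Algorithm~\ref{alg:fix}: after the loop, every index $i$ satisfies $(A^T\bar y')_i\geq 1-3\epsilon$ (either because $\lambda_i>-2\epsilon$ held originally, so that $(A^T\bar y)_i>1-3\epsilon$ already and $\bar y'\geq\bar y$ only helps, or because the explicit fix added $-\lambda_i/A_{i,j}$ units of the $j$th coordinate with $A_{i,j}=\max_{j'}A_{i,j'}$, which raises $(A^T\bar y')_i$ to at least $1-\epsilon$), and dividing by $1-3\epsilon$ yields $A^Ty\geq\vone$. The proof therefore reduces to bounding $\E[\vone^T y]$, and my plan is to split
\[
\vone^T y \;=\; \frac{1}{1-3\epsilon}\left(\vone^T\bar y \;+\; \sum_{i\,:\,\lambda_i\leq -2\epsilon}\frac{-\lambda_i}{\max_{j}A_{i,j}}\right)
\]
and control the two summands separately in expectation.

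For $\E[\vone^T\bar y]$, the identity $\mu\,\vone^T\overrightarrow{p(x_k)}=f_\mu(x_k)+\vone^T x_k$ falls out directly from the definition of $f_\mu$ in~\eqref{eqn:smoothing}. Averaging over $k=0,\dots,T-1$ and taking expectations gives
\[
\mu\,\E[\vone^T\bar y] \;=\; \frac{1}{T}\sum_{k=0}^{T-1}\E[f_\mu(x_k)] \;+\; \frac{1}{T}\sum_{k=0}^{T-1}\E[\vone^T x_k].
\]
Combined with Theorem~\ref{thm:main} (and with the monotone decrease of $f_\mu$ along the trajectory noted in Section~\ref{sxn:intro-main}, which lets me transfer the last-iterate bound to the running average) and with Lemma~\ref{lemma:smoothing_properties}(2) to control $\vone^T x_k$ via $f_\mu(x_k)\geq -(1+\epsilon)\opt$, I expect to obtain $\E[\vone^T\bar y]\leq (1+O(\epsilon))\opt$.

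For the fixing term I would use the identity highlighted in Section~\ref{sxn:main-results-covering}, namely $(A^T\bar y)_i-1=\frac{1}{T}\sum_k\nabla_i f_\mu(x_k)$, so $\lambda_i\leq -2\epsilon$ occurs precisely when the $i$th time-averaged gradient falls below $-\epsilon$. Theorem~\ref{thm:main} together with $f_\mu\geq -(1+\epsilon)\opt$ leaves only $O(\epsilon)\opt$ of ``slack'' in expectation to create such large-magnitude negative gradients, and the iteration count $T\geq 2w^2\log(n/\epsilon)/\epsilon^2$ is chosen exactly so that the per-coordinate averages concentrate around their non-negative KKT targets. A primal-dual argument in the style of~\cite{AO15-parallel}, using that $1/\max_jA_{i,j}$ is precisely the scale built into the $i$th packing row, then bounds $\E\bigl[\sum_{i\,:\,\lambda_i\leq -2\epsilon}-\lambda_i/\max_jA_{i,j}\bigr]\leq O(\epsilon)\opt$. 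Multiplying by $1/(1-3\epsilon)\leq 1+O(\epsilon)$ and adding the two contributions gives the advertised $(1+10\epsilon)\opt$ bound.

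The main obstacle is this last step. Unlike the deterministic analysis of~\cite{AO15-parallel}, each iteration of Algorithm~\ref{alg:PP} updates only a randomly chosen bucket of coordinates, so individual gradients are \emph{not} driven deterministically toward zero and I only control their running average in expectation. Turning that averaged control into a per-coordinate concentration statement that holds simultaneously for all $i$---strong enough that the fixing sum (with its $1/\max_j A_{i,j}$ weights) remains $O(\epsilon)\opt$---is where the parameter choice $T\geq \max\{6w/(\alpha\epsilon),\,2w^2\log(n/\epsilon)/\epsilon^2\}$ becomes essential and where the bulk of the technical work will lie.
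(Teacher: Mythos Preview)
Your feasibility argument is fine and matches the paper. There are two genuine gaps in the rest of the plan.

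\textbf{The bound on $\E[\vone^T\bar y]$ does not go through as written.} Your identity $\mu\,\vone^T\overrightarrow{p(x_k)}=f_\mu(x_k)+\vone^T x_k$ is correct, but the best control you cite gives only $f_\mu(x_k)+\vone^T x_k\leq O(\epsilon)\opt$ (from $\E[f_\mu(x_T)]\leq -(1-5\epsilon)\opt$ together with $\vone^T x_k\leq (1+\epsilon)\opt$). Dividing by $\mu=\Theta\bigl(\epsilon/\log(nm/\epsilon)\bigr)$ then yields $\vone^T\bar y\leq O(\log(nm/\epsilon))\opt$, which is off by a logarithmic factor and cannot be tightened with Theorem~\ref{thm:main} alone. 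The paper (via Lemma~D.1 of~\cite{AO15-parallel}) does not divide by $\mu$; instead it applies the telescoped inequality~\eqref{eq:covSolCondition} at $u=0$ to bound $\frac{1}{T}\sum_k\langle\nabla f_\mu(x_k),x_k\rangle=\frac{1}{T}\sum_k\bigl(-\vone^Tx_k+(Ax_k)^T p(x_k)\bigr)$ by $O(\epsilon)\opt$, and then passes from $(Ax_k)^Tp(x_k)$ to $\vone^Tp(x_k)$ using that any $j$ with $(Ax_k)_j<1-\Theta(\epsilon)$ has $p_j(x_k)$ negligibly small. This extracts $\vone^T\bar y$ without the $1/\mu$ blow-up.

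\textbf{The fixing term is missing the key mechanism.} Your sketch appeals to ``per-coordinate concentration around KKT targets,'' but concentration alone is not enough: you need something to concentrate \emph{around}. The paper's crucial observation is that the quantity $S_i = w\sum_k Z_k^{(i)}\bigl(\min\{\nabla_if_\mu(x_k),1\}+\epsilon\bigr)$, which sums only over iterations that actually update coordinate $i$, satisfies a \emph{deterministic} lower bound $S_i\geq -\tfrac{w\log 2n}{\alpha}$, because $-\alpha\sum_k\xi_k^{(t)}[i]$ is exactly the cumulative log-multiplicative change to $x[i]$ and $x_T[i]/x_0[i]$ is bounded by Claim~\ref{claim:approxFeasible}. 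One then shows via an Azuma martingale bound that the all-iterations sum $U_i=\sum_k\bigl(\min\{\nabla_if_\mu(x_k),1\}+\epsilon\bigr)$ is within $\epsilon T$ of $S_i$ with probability $\geq 1-\epsilon/n$ (this is where $T\geq 2w^2\log(n/\epsilon)/\epsilon^2$ is used). Since $(A^T\bar y)_i-1+\epsilon\geq U_i/T$, this gives $\lambda_i>-2\epsilon$ with high probability, so fixing is needed for each $i$ only with probability $\leq\epsilon/n$, and each fix costs at most $1\leq\opt$; hence the expected total fixing cost is $\leq\epsilon\opt$. Your proposal does not contain the deterministic $S_i$ bound, which is the linchpin connecting the algorithm's multiplicative update structure to per-coordinate covering feasibility.
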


\noindent
\textbf{Remark.}
Our main Theorem~\ref{thm:dual2} follows almost immediately from Theorem~\ref{thm:dual}, when combined with a standard application of Markov bound and several lemmas below.
From Theorem~\ref{thm:dual}, we have $\E[\vone^Ty]\leq (1+10\epsilon)\opt$, and $\vone^Ty>\opt$, since $y$ is always feasible. 
Then $y-\opt$ is a non-negative random variable with expectation at most $10\epsilon$. 
Using Markov's inequality, with at least probability $9/10$, we have $\vone^Ty \leq (1+100\epsilon)\opt$, giving a $(1+O(\epsilon)$ approximation. 
The expected total work and iteration count is dominated by Algorithm~\ref{alg:PP}. 
We show in Lemma~\ref{lemma:bary} and Lemma~\ref{lemma:martingale} that runing Algorithm~\ref{alg:PP} for $T=\max\{\frac{6w}{\alpha\epsilon},\frac{2w^2\log\frac{n}{\epsilon}}{\epsilon^2}\}=O(\frac{\log^2(\frac{1}{\epsilon})\log N}{\epsilon^2})$ iterations is sufficient.
This proves our main Theorem~\ref{thm:dual2}.

\subsection{Discussion of Main Technical Ideas Underlying Our Proofs}
\label{sxn:main-results-ideas}

Before proceeding with our proofs of Theorems~\ref{thm:main} and~\ref{thm:dual} in Sections~\ref{sxn:analysis} and~\ref{sxn:proof-dual-thm}, repsectively, we provide here a discussion of the main technical ideas underlying our methods.

At a high level, we (as well as Allen-Zhu and Orecchia~\cite{AO15-parallel,AO15-stochastic}) use the same two-step approach of Nesterov~\cite{Nesterov05}. 
The first step involves smoothing, which transforms the constrained problem into a convex and {\it smooth} objective function with trivial or no constraints. 
(By smooth, we mean that the gradient of the objective function has some property in the flavor of Lipschitz continuity.) 
In our case, we optimize the function $f_\mu(x)$ with the trivial constraints $x\geq \vec{0}$, and the smoothness property of $f_\mu(x)$ is specified in Lemma~\ref{lemma:Lipschitz}. 
Once smoothing is accomplished, the second step uses one of several first order methods for convex optimization in order to obtain an approximate solution. 
Examples of standard application of this approach to packing and covering LPs includes the width-dependent solvers of~\cite{Nesterov05,Nemirovski04} as well as multiplicative weights update solvers~\cite{AHK12}. 
In these examples, the dependence of $\opt$ is introduced, when the entropy function is used in the smoothing step. The dependence on $\rho$ is from using the full gradient, which can be large if there are large entries in $A$.

We will see in Section~\ref{sxn:solver-prelim} that $f_\mu(x)$ is the objective function derived from a different smoothing step, which avoids the dependence of $\opt$. 
The function $f_\mu(x)$ is used in~\cite{AO15-parallel}, which is the first width-independent result following the optimization approach, as well as in later works~\cite{AO15-stochastic,WRM15}. 
The different smoothing alone is not enough for width-independence, however, since in the optimization step, the convergence of the first order method will depend on the largest absolute value of the gradient or feedback vector. 
We use the same idea as in~\cite{AO15-stochastic,AO15-parallel} to use truncated gradient in our algorithm, thus effectively reducing the width to $1$. 
More specifically, the regret term in standard mirror descent analysis depends on the width, while in Lemma~\ref{lemma:mirror}, our regret term $\alpha^2\opt$ doesn't.

A major issue that arises with this approach is that the convex objective function $f_\mu(x)$ is \emph{not} smooth in the standard Lipschitz continuity sense. 
The authors in~\cite{AO15-parallel} work around it by showing that the gradient is multiplicatively smooth within some local neighborhood, and they constrain their update steps to stay inside that region. 
The bottleneck of their convergence, as we see it, is that the step size of the update is too small due to interference (recall that interference is the dependence of variable $i$'s gradient on the value of other variables) between different coordinates. 
In a typical iteration, they aim to move all variables simultaneously proportional to the respective gradients, without changing the gradient of any variable by too much. 
When the gradients of the variables are not on the same scale, a natural obstacle arises: when variable $i$ has a large gradient, we would like to move $i$ by a large step in order to harness the gradient improvement, but we are prevented from doing so, due to the interference of $i$ with some other variable $j$, which has a tiny gradient. 

We tackle this bottleneck by designing in a dynamic manner selective coordinate descent steps designed to reduce interference. 
In each iteration, we group all the variables according to the magnitudes of their gradients such that variables in the same group all have approximately the same magnitudes up to some constant factor, as specified in~\eqref{eq:groups} in Section~\ref{sxn:main-results-packing}. 
If we then only update variables of one randomly chosen group, as in Step~\ref{step:selective-update} of Algorithm~\ref{alg:PP}, then we can take larger steps for the subset of variables we update. 
In addition, we show that we only need $\log(1/\epsilon)$ groups, so each iteration we update a large fraction of the variables on expectation. 

To make our analysis work out, we follow~\cite{AO15-parallel}, and we interpret the update step as both a gradient descent step and a mirror descent step.
(We note, though, that this is different from~\cite{AO15-stochastic,AO14}, where the gradient descent step and the mirror descent step are separate steps, and the algorithm takes a linear coupling of the two steps to achieve acceleration as in Nesterov's accelerated gradient descent. In~\cite{AO15-parallel} and in our algorithm, it is a single update step each iteration, interpreted as both a gradient step and a mirror descent step.) 
This two-way interpretation is required when we use convexity to upper bound the gap between $f_\mu(x_k)$ and $f_\mu(u)$ for some $x_k,u$. Recall we break the gradients into small, medium and large components as in~\eqref{eq:smallLarge}, so we have:
\[
f_\mu(x_k)-f_\mu(u)\leq \dotp{\nabla f_\mu(x_k),x_k-u}=\dotp{\zeta_k,x_k-u}+\dotp{\xi_k,x_k-u}+\dotp{\eta_k,x_k-u}   .
\]
We will see that in expectation, the loss incurred by the medium component will be bounded using the mirror descent interpretation in Lemma~\ref{lemma:mirror}, and the loss incurred by the large component will be bounded using the gradient descent interpretation in Lemma~\ref{lemma:gradient}.

A benefit of our DB-SCD method is that it is modular and thus it can be coupled cleanly with existing methods. 
To illustrate this, since our interest in this particular problem was inspired by the original improvement of~\cite{AO15-parallel}, we will to the extent possible adopt the techniques from their work, pointing out similarities in the following sections whenever possible.

\section{Analysis of Algorithm for Packing LPs: Proof of Theorem~\ref{thm:main}}
\label{sxn:analysis}


In this section, we will provide a proof of Theorem~\ref{thm:main}.
Recall that we denote by $\opt$ the optimal value of~\eqref{eq:packLP} and that Algorithm~\ref{alg:PP} will compute a {\it $(1-\epsilon)$-approximation} $x$ where $Ax\leq \vone$ and $\vone^Tx\geq (1-\epsilon)\opt$.
In Section~\ref{sxn:solver-prelim}, we'll present some preliminaries and describe how we perform smoothing on the original packing objective function. 
We'll analyze the update step as a gradient descent step in Section~\ref{sxn:solver-gradient}, and we'll analyze the same update step as a mirror descent step in Section~\ref{sxn:solver-mirror}. 
Finally, in Section~\ref{sxn:solver-coupling}, we'll show how to combine the two analyses to complete the proof of Theorem~\ref{thm:main}. 
Some of the following results are technically-tedious but conceptually-straightforward extensions of analogous results from~\cite{AO15-parallel}, and some of the results are restated from~\cite{AO15-parallel}.
For completeness, we provide the proof of all of these results, with the latter relegated to Appendix~\ref{App:proofs}.

\subsection{Preliminaries and Smoothing the Objective}
\label{sxn:solver-prelim}

To start, let's assume  that
\[
\min_{i\in [n]}\maxAi=1   .
\]
This assumption is without loss of generality: since we are interested in multiplicative $(1-\epsilon)$-approximation, we can simply scale $A$ for this to hold without sacrificing approximation quality.
With this assumption, the following lemma holds.
(This lemma is the same as Proposition $2.2.(a)$ in~\cite{AO15-parallel}, and its proof is included for completeness in Appendix~\ref{App:proofs}.)
\begin{restatable}{lemma}{optrange}
$\opt\in [1,n]$
\end{restatable}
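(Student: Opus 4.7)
The plan is to prove the two bounds separately, using the normalization assumption $\min_{i\in[n]} \maxAi = 1$ (which means every column of $A$ has infinity-norm at least $1$, and at least one column has infinity-norm exactly $1$).

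For the upper bound $\opt \le n$, I would fix any feasible $x \ge 0$ with $Ax \le \vone$ and argue coordinate-by-coordinate that $x[i] \le 1$. Since every column satisfies $\maxAi \ge 1$, for each $i$ there exists a row index $j_i$ with $A_{j_i,i} \ge 1$. The $j_i$-th row of the constraint gives $\sum_k A_{j_i,k} x[k] \le 1$, and since all entries are non-negative, dropping all terms except $k=i$ yields $A_{j_i,i}\, x[i] \le 1$, hence $x[i] \le 1$. Summing over $i$ gives $\vone^T x \le n$, so $\opt \le n$.

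For the lower bound $\opt \ge 1$, I would exhibit an explicit feasible solution of value $1$. Pick $i^* \in \arg\min_{i\in[n]} \maxAi$, so $\|A_{:i^*}\|_\infty = 1$. Set $x = \e_{i^*}$ (the $i^*$-th standard basis vector). Then $Ax = A_{:i^*}$, and by definition every coordinate of $A_{:i^*}$ lies in $[0,1]$, so $Ax \le \vone$. Since $x \ge 0$, the point is feasible, and $\vone^T x = 1$. Therefore $\opt \ge 1$.

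There is no real obstacle here: the proof is a direct calculation that simply unpacks the normalization hypothesis into column-wise upper bounds (for the packing constraints) and produces a trivial feasible point (for the lower bound). The only thing to keep straight is the direction of the assumption — the minimum of column infinity-norms being $1$ simultaneously yields a lower bound $\maxAi \ge 1$ for every column (needed for $\opt \le n$) and an exactly-attaining column (needed for $\opt \ge 1$).
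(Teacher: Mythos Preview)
Your proposal is correct and follows essentially the same approach as the paper's proof: exhibit the standard basis vector on the column with infinity-norm exactly $1$ to get $\opt\ge 1$, and use $\maxAi\ge 1$ for every $i$ to bound each coordinate of a feasible $x$ by $1$, giving $\opt\le n$. If anything, your write-up is more careful than the paper's (which slightly misstates that each variable has a coefficient ``of $1$'' rather than ``at least $1$'').
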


\noindent
With $\opt$ being at least $1$, the error we introduce later in the smoothing step will be small enough that the smoothing function approximates the packing LP well enough with respect to $\epsilon$ around the optimum. 


We will turn the packing LP objective into a smoothed objective function $f_\mu(x)$, as used in~\cite{AO15-parallel,AO15-stochastic}, and we are going to find a $(1-\epsilon)$-approximation of the packing LP by approximately minimizing $f_\mu(x)$ over the region $x\geq 0$.
The function $f_\mu(x)$ is 
\[
f_\mu(x)\myeq -\vone^Tx+\max_{y\geq 0}\{y^T(Ax-\vone)+\mu H(y)\}   ,
\]
and it is a smoothed objective in the sense that it turns the packing constraints into soft penalties, with $H(y)$ being a regularization term. 
Here, we use the generalized entropy $H(y)=-\sum_j y_j\log y_j + y_j$, where $\mu$ is the smoothing parameter balancing the penalty and the regularization. 
It is straightforward to compute the optimal $y$, and write $f_\mu(x)$ explicitly, as stated in the following lemma.

\begin{lemma}
\label{lemma:explicitFunction}
$f_\mu(x)=-\vone^Tx+\mu \sum_{j=1}^m p_j(x)$, where $p_j(x)\myeq \exp(\frac{1}{\mu}((Ax)_j)-1)$.
\end{lemma}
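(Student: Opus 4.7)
The plan is to compute the inner maximization in the definition of $f_\mu(x)$ explicitly via first-order optimality conditions. Because both the linear term $y^T(Ax-\vone)$ and the generalized entropy $H(y)=\sum_j(-y_j\log y_j+y_j)$ are separable in the coordinates of $y$, the maximization
\[
\max_{y\geq 0}\bigl\{y^T(Ax-\vone)+\mu H(y)\bigr\}
\]
decouples into $m$ independent one-dimensional problems, one per index $j$, of the form $\max_{y_j\geq 0}\bigl\{y_j((Ax)_j-1)+\mu(-y_j\log y_j+y_j)\bigr\}$. So the first step is simply to isolate one such scalar problem.

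Next, I would solve each scalar subproblem. Differentiating with respect to $y_j$ gives
\[
(Ax)_j-1+\mu(-\log y_j - 1 + 1) = (Ax)_j - 1 - \mu\log y_j,
\]
and setting this to zero yields the unique critical point $y_j^\star = \exp\!\bigl(\tfrac{1}{\mu}((Ax)_j-1)\bigr) = p_j(x)$. Since the second derivative equals $-\mu/y_j<0$ on $y_j>0$, the objective is strictly concave in $y_j$ and $y_j^\star$ is the unconstrained maximizer; it is automatically in the feasible region $y_j\geq 0$, so there is no boundary case to handle.

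The third step is to plug $y_j^\star=p_j(x)$ back in and simplify. Using the identity $\mu\log y_j^\star = (Ax)_j-1$, the maximum value of the $j$-th scalar problem becomes
\[
p_j(x)\bigl((Ax)_j-1\bigr) - \mu p_j(x)\log p_j(x) + \mu p_j(x) = \mu p_j(x),
\]
where the first two terms cancel. Summing over $j$ gives $\max_{y\geq 0}\{\cdots\}=\mu\sum_{j=1}^m p_j(x)$, and adding back the $-\vone^Tx$ term yields the claimed closed form. No step is really an obstacle here — the only thing to be slightly careful about is verifying concavity and that the unconstrained optimum respects the nonnegativity constraint, both of which fall out immediately from the exponential form of $y_j^\star$.
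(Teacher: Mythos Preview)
Your proposal is correct and is exactly the straightforward computation the paper alludes to; the paper does not spell out a proof for this lemma, merely remarking that ``it is straightforward to compute the optimal $y$,'' and your separable first-order analysis is precisely that computation.
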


Optimizing $f_\mu(x)$ gives a good approximation to $\opt$, in the following sense. 
If we let $x^*$ be an optimal solution, and $u^*\myeq (1-\epsilon/2)x^*$, then we have the properties in the following lemma.
(This lemma is the same as Proposition $2.2$ in~\cite{AO15-parallel}, and its proof is included for completeness in Appendix~\ref{App:proofs}.)
\begin{restatable}{lemma}{smoothing}\label{lemma:smoothing_properties}
Setting the smoothing parameter $\mu=\frac{\epsilon}{4\log(nm/\epsilon)}$, we have
\begin{enumerate}
\item $f_\mu(u^*)\leq -(1-\epsilon)\opt$.
\item $f_\mu(x)\geq -(1+\epsilon)\opt$ for every $x\geq 0$.
\item Letting $x_0\geq 0$ be such that $x_0[i]=\frac{1-\epsilon/2}{n\maxAi}$ for each $i\in [n]$, we have $f_\mu(x_0)\leq -\frac{1-\epsilon}{n}$.
\item For any $x\geq 0$ satisfying $f_\mu(x)\leq 0$, we must have $Ax\leq (1+\epsilon)\vone$, and thus $\vone^Tx\leq (1+\epsilon)\opt$.
\item If $x\geq 0$ satisfies $f_\mu(x)\leq -(1-O(\epsilon))\opt$, then $\frac{1}{1+\epsilon}x$ is a $(1-O(\epsilon))$-approximation to the packing~LP.
\item The gradient of $f_\mu(x)$ is
\[
\nabla f_\mu(x)=-\vone + A^T\overrightarrow{p(x)} \quad \text{where} \quad p_j(x)\myeq \exp(\frac{1}{\mu}((Ax)_j-1) ,
\]
and $\nabla_i f_\mu(x)=-1+\sum_j A_{ji}p_j(x)\in [-1,\infty]$.
\end{enumerate}
\end{restatable}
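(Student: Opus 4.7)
The plan is to organize the six items into three clusters, all anchored by the key calibration identity $\exp(\pm\epsilon/(2\mu)) = (nm/\epsilon)^{\pm 2}$ arising from $\mu = \epsilon/(4\log(nm/\epsilon))$. This identity says that a packing constraint with slack at least $\epsilon/2$ contributes a negligible penalty to $f_\mu$, while one violated by at least $\epsilon$ contributes a catastrophically large one.

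The small-penalty cluster handles items (1) and (3). For (1), since $u^* = (1-\epsilon/2)x^*$ is feasible with $(Au^*)_j - 1 \leq -\epsilon/2$, the total penalty is at most $\mu m \cdot (\epsilon/(nm))^2$, which is $o(\epsilon\opt)$ by the $\opt \geq 1$ bound from the earlier range lemma, while $-\vone^T u^* = -(1-\epsilon/2)\opt$. For (3), the normalization $\min_i \maxAi = 1$ gives $(Ax_0)_j = \sum_i A_{ji}(1-\epsilon/2)/(n\maxAi) \leq 1-\epsilon/2$, so the same penalty bound applies; meanwhile $\vone^T x_0 \geq (1-\epsilon/2)/n$ because at least one coordinate $i$ attains $\maxAi = 1$ and contributes $(1-\epsilon/2)/n$ by itself.

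The large-penalty cluster handles items (2), (4), and (5) by a single case split on $M \myeq \max_j(Ax)_j$. If $M \leq 1+\epsilon$, then $x/(1+\epsilon)$ is primal-feasible, giving $\vone^T x \leq (1+\epsilon)\opt$ and settling both the lower bound in (2) and the conclusion of (4). Otherwise $M > 1+\epsilon$, and the crude upper bound $\vone^T x \leq nM$ (using $x_i \maxAi \leq (Ax)_j$ for the maximizing $j$, together with $\maxAi \geq 1$) is dwarfed by the single penalty term $\mu\exp((M-1)/\mu)$: after taking logs, the inequality $(M-1)/\mu \geq \log(nM/\mu)$ holds at $M = 1+\epsilon$ (comparing $4\log(nm/\epsilon)$ with $\log n + \log(1/\epsilon) + O(\log\log(nm/\epsilon))$) and remains true for all larger $M$ since the LHS grows linearly while the RHS grows logarithmically. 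This forces $f_\mu(x) \geq 0 \geq -(1+\epsilon)\opt$ in the violating case, proving (2), and contradicts $f_\mu(x) \leq 0$ in (4). Item (5) is then one line: $f_\mu(x) \leq -(1-O(\epsilon))\opt \leq 0$ triggers (4), and $\vone^T x \geq -f_\mu(x) \geq (1-O(\epsilon))\opt$ combined with $Ax \leq (1+\epsilon)\vone$ makes $x/(1+\epsilon)$ a $(1-O(\epsilon))$-approximation.

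Finally, item (6) is a direct chain-rule calculation on $f_\mu(x) = -\vone^T x + \mu \sum_j \exp(((Ax)_j - 1)/\mu)$, giving $\nabla f_\mu(x) = -\vone + A^T p(x)$ componentwise, with the lower bound $\nabla_i f_\mu(x) \geq -1$ immediate from non-negativity of $A$ and of the exponential. I expect the only genuinely delicate step to be the log-comparison that underpins the large-penalty case of (2) and (4); everything else is routine bookkeeping around the calibration of $\mu$.
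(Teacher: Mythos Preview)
Your proposal is correct and follows essentially the same argument as the paper. The one organizational difference worth noting is in the large-penalty cluster: for items (2) and (4) the paper runs two separate contradiction arguments, parametrizing (2) by $v$ with $\vone^Tx=(1+v)\opt$ and then invoking LP optimality to find a constraint with $(Ax)_j\geq 1+v$, whereas you parametrize directly by $M=\max_j(Ax)_j$ and bound $\vone^Tx\leq nM$ via the column-max normalization $\maxAi\geq 1$. Both routes reduce to the same comparison of $\mu\exp((M-1)/\mu)$ against something of order $n(1+M)$, so the underlying calculation is identical; your single case split on $M$ is arguably a tidier packaging of (2), (4), and (5) together. One small wording point: when you write ``$x_i\maxAi\leq (Ax)_j$ for the maximizing $j$,'' make explicit that this is the row $j$ achieving $\maxAi$ (which may vary with $i$), not the row achieving $M$, to avoid ambiguity.
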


\noindent
Although $f_\mu(x)$ gives a good approximation to the packing LP without introducing dependence of $\opt$, we cannot simply apply the standard (accelerated) gradient descent algorithm to optimize it, as $f_\mu(x)$ doesn't have the necessary Lipschitz-smoothness property. 
(Indeed, our DB-SCD method was designed to address this issue.)

As before~\cite{WRM15}, we interpret our update step $x_{k+1}[i]\leftarrow \newx[i]\myeq x_k[i]\exp(-\alpha \xi^{(t)}_k[i])$ in Algorithm~\ref{alg:PP} as both a gradient descent step as well as a mirror descent step. 
That is, in order to prove the theorem, our analysis will view it from both perspectives.
We proceed now with the respective analysis for the two interpretations.

\subsection{Gradient Descent Step}
\label{sxn:solver-gradient}

We will first analyze the update step in Algorithm~\ref{alg:PP} as a gradient descent step. 
As in most gradient descent analysis, we need to bound our step's impact on the gradients. 
To do so, we will show $f_\mu(x)$ is \emph{locally multiplicative Lipschitz continuous}, in a sense quantified by the following lemma. 
Note the result is a stronger version of Proposition $3.6$ in~\cite{AO15-parallel}, in the sense that the step size $\alpha$ is $1/\epsilon$ larger. 
This improvement is achieved due to the reduced interference from our DB-SCD updating method.

\begin{lemma}
\label{lemma:Lipschitz}
Let $\newx[i]=x_k[i]\exp(-\alpha \xi^{(t)}_k[i])$, for any $t=0,\ldots,w-1$, as in Algorithm~\ref{alg:PP}. 
Let $B_t=\{i|\xi^{(t)}_k[i]>0\}$ be the set of variables we update. 
If $Ax_k\leq (1+\epsilon)\vone$, then for any $x=\tau x_k+(1-\tau)\newx$ where $\tau\in [0,1]$, we have $\forall i\in B_t, \deli{x}$ is between $\frac{1}{2}\deli{x_k}$ and $\frac{3}{2}\deli{x_k}$.
\end{lemma}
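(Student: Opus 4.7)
The plan is to track how the exponential penalties $p_j$ change along the segment from $x_k$ to $x_{k+1}$ and then translate that control into a multiplicative bound on the gradient coordinates indexed by $i \in B_t$. By Lemma~\ref{lemma:smoothing_properties}(6), $\nabla_i f_\mu(x) = -1 + \sum_j A_{ji} p_j(x)$ and $p_j(x)/p_j(x_k) = \exp(\delta_j)$ with $\delta_j := (A(x-x_k))_j/\mu$, so
\[
\nabla_i f_\mu(x) - \nabla_i f_\mu(x_k) = \sum_j A_{ji}\, p_j(x_k)\bigl(e^{\delta_j} - 1\bigr).
\]
Since $i \in B_t$ forces $\nabla_i f_\mu(x_k) = \xi_k[i]$ to have magnitude strictly greater than $\epsilon 2^t$, it suffices to bound the right-hand side by $\tfrac{1}{2}\epsilon 2^t$.

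First I would bound $|\delta_j|$. Since $x = \tau x_k + (1-\tau) x_{k+1}$ and only coordinates in $B_t$ move, $|x[i] - x_k[i]| \leq |x_{k+1}[i] - x_k[i]| = x_k[i]\,|1 - e^{-\alpha \xi^{(t)}_k[i]}|$, which is bounded by $\alpha x_k[i]\,|\xi^{(t)}_k[i]|$ up to a factor close to $1$ (using $\alpha|\xi^{(t)}_k[i]| \leq \mu/10$). The bucket definition~\eqref{eq:groups} gives $|\xi^{(t)}_k[i]| \leq \epsilon 2^{t+1}$, and combined with the hypothesis $(Ax_k)_j \leq 1+\epsilon$ this yields
\[
|\delta_j| \;\leq\; \frac{1}{\mu}\sum_{i \in B_t} A_{ji}\, \alpha x_k[i]\cdot \epsilon 2^{t+1} \;\leq\; \frac{2\alpha\epsilon 2^t}{\mu}(Ax_k)_j \;\leq\; \frac{2\alpha\epsilon 2^t(1+\epsilon)}{\mu} \;\leq\; \tfrac{3}{20}\epsilon 2^t,
\]
where the final step uses $\alpha = \mu/20$ and $\epsilon \leq 1/2$. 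In particular $|\delta_j| \leq 3/20$, since $2^t \leq 2^{w-1} \leq 1/\epsilon$.

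To finish, combine $|e^{\delta_j} - 1| \leq |\delta_j|\, e^{|\delta_j|}$ with the identity $\sum_j A_{ji} p_j(x_k) = \nabla_i f_\mu(x_k) + 1 \leq 2$ (using $\xi_k[i] \leq 1$ from the definition of $\xi_k$) to obtain
\[
|\nabla_i f_\mu(x) - \nabla_i f_\mu(x_k)| \;\leq\; \max_j|e^{\delta_j} - 1|\cdot\bigl(\nabla_i f_\mu(x_k)+1\bigr) \;\leq\; \tfrac{3}{20}\epsilon 2^t\cdot e^{3/20}\cdot 2 \;<\; \tfrac{1}{2}\epsilon 2^t.
\]
Since $|\nabla_i f_\mu(x_k)| > \epsilon 2^t$, this is less than $\tfrac{1}{2}|\nabla_i f_\mu(x_k)|$, so $\nabla_i f_\mu(x) \in [\tfrac12,\tfrac32]\,\nabla_i f_\mu(x_k)$, as required.

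The hard part is the constant-chasing: both the change in $p_j$ and the lower bound on $|\nabla_i f_\mu(x_k)|$ scale linearly with $\epsilon 2^t$, and only the factor-of-$2$ width of each bucket leaves room for the multiplicative $[\tfrac12,\tfrac32]$ slack. This cancellation is precisely where the DB-SCD idea pays off: without the bucketing one would be forced to use $|\xi_k[i]| \leq 1$ uniformly while only having $|\nabla_i f_\mu(x_k)| \geq \epsilon$ uniformly, giving $1/\epsilon$ of wasted slack in $\alpha$ and matching only the weaker $\tilde{O}(1/\epsilon^3)$ step-size of~\cite{AO15-parallel}.
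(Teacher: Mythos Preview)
Your approach is essentially the paper's: bound the multiplicative change in each $p_j$ along the segment (the paper states $p_j(x)\in p_j(x_k)\cdot[1-\tfrac{\epsilon 2^t}{4},1+\tfrac{\epsilon 2^t}{4}]$, which is your $|e^{\delta_j}-1|$ estimate), then convert this to a bound on $\nabla_i f_\mu(x)$. There is, however, one genuine gap.

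The step $\sum_j A_{ji}p_j(x_k)=\nabla_i f_\mu(x_k)+1\leq 2$ is \emph{not} justified by $\xi_k[i]\leq 1$. In the top bucket $t=w-1$ the interval $(\epsilon 2^{w-1},\epsilon 2^w]$ contains the truncation value $1$, so $B_{w-1}$ can contain indices with $\xi_k[i]=1$ but $\nabla_i f_\mu(x_k)$ arbitrarily large (this is exactly the $\eta_k$ regime). For such $i$ your intermediate bound $|\nabla_i f_\mu(x)-\nabla_i f_\mu(x_k)|<\tfrac12\epsilon 2^t$ fails. The same slip appears earlier when you write ``$i\in B_t$ forces $\nabla_i f_\mu(x_k)=\xi_k[i]$''; that equality is false precisely here.

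The fix is short and is what the paper does: skip the absolute bound $\tfrac12\epsilon 2^t$ and compare directly to $|g|/2$ with $g:=\nabla_i f_\mu(x_k)$. With $|e^{\delta_j}-1|\leq c$ (say $c=\tfrac14\epsilon 2^t$) you get $|\nabla_i f_\mu(x)-g|\leq c(g+1)$ in the positive case, and $c(g+1)\leq g/2$ is equivalent to $g\geq \tfrac{2c}{1-2c}$, which holds because $g>\epsilon 2^t$ and $c\leq 1/4$. The paper phrases this as $(g+1)(1\pm c)-1\in[\tfrac12 g,\tfrac32 g]$, which is the same inequality. The negative-gradient branch is unaffected by the gap, since there $g\in[-1,-\epsilon)$ and $g+1\leq 1$ genuinely.
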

\begin{proof}
Because for all $i\in B_t$, $\xi^{(t)}_k[i] \in (\epsilon 2^t,\epsilon 2^{t+1}]\cup [-\epsilon 2^{t+1},-\epsilon 2^t)$, each variable changes multiplicatively by at most $\exp(\pm \alpha \epsilon 2^{t+1})$, and since $\alpha \epsilon 2^{t+1} \leq 1/4$, we must have for all $i$,
\begin{equation}
\label{eq:xrange}
x[i]\in x_k[i]\cdot [1-\frac{8}{3}\alpha\epsilon 2^t,1+\frac{8}{3}\alpha\epsilon 2^t]  .
\end{equation}
Now we look at the impact of the step on the exponential penalties
\[
p_j(x)=\exp(\frac{1}{\mu}((Ax)_j-1))  .
\]
Due to~\eqref{eq:xrange}, and $(Ax_k)_j\leq (1+\epsilon)$ for any $j$, we have
\[
|(Ax)_j-(Ax_k)_j|\leq \frac{8}{3}\alpha\epsilon 2^t (Ax_k)_j\leq \frac{10}{3}\alpha\epsilon2^t  .
\]
Then by our choice of $\alpha$, we have
\[
p_j(x)\geq p_j(x_k)\exp (-\frac{10\alpha\epsilon 2^t}{3\mu})=p_j(x_k)\exp(-\frac{\epsilon 2^t}{6})  .
\]
Since $\epsilon 2^t\leq 1$, we have $\exp(-\frac{\epsilon 2^t}{6})\geq (1-\frac{\epsilon 2^t}{4})$. By a similar calculation for the upper bound, we have
\begin{equation}
\label{eq:prange}
p_j(x)\in p_j(x_k)\cdot[1-\frac{\epsilon 2^t}{4},1+\frac{\epsilon 2^t}{4}]  .
\end{equation}
For any $i\in B_t$, if $\xi^{(t)}_k[i]\in (\epsilon 2^t,\epsilon 2^{t+1}]$, we have
\begin{align*}
\nabla_i f_\mu(x) &=(A^Tp(x))_i-1\\
&> (A^Tp(x_k))(1-\frac{\epsilon 2^t}{4})-1\\
&= (\nabla_i f_\mu(x_k) +1)(1-\frac{\epsilon 2^t}{4})-1\\
&\geq \frac{\nabla_i f_\mu(x_k)}{2}  ,
\end{align*}
where the last step is due to $\epsilon 2^t\leq 1$ and $\nabla_i f_\mu(x_k)\geq \xi^{(t)}_k[i]>\epsilon 2^t$. 
By similar calculation, we get $\nabla_i f_\mu(x)\leq \frac{3}{2}\nabla_i f_\mu(x_k)$. The same holds for the case $\xi^{(t)}_k[i]\in [-\epsilon 2^{t+1},-\epsilon 2^t)$.
\end{proof}

We will see in Claim~\ref{claim:approxFeasible} that the condition of $Ax_k\leq (1+\epsilon)\vone$ holds for all $k=0,\ldots,T$. 
Once we establish smoothness of the gradients within the range of our update step, we can lower bound the improvement we make. 
In particular, the term $\dotp{\alpha\eta_k^{(t)},x_k-u}$ is the loss incurred from the truncation, as our update step doesn't act on the truncated part, but it shows up when we use convexity to bound the gap to optimality.
\begin{lemma}
\label{lemma:gradient}
For all $t=0,\ldots,w-1$, any $u\geq 0$
\[
\dotp{\alpha\eta^{(t)}_k,x_k-u}\leq 4(f_\mu(x_k)-f_\mu(\newx))   .
\]
\end{lemma}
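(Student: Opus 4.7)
The plan is to split into two cases according to whether $t = w-1$ (the top bucket, where $\eta_k^{(t)} = \eta_k$ is supported) or $t < w-1$ (where $\eta_k^{(t)} \equiv 0$). The two workhorses are: (i) the Lipschitz-type Lemma~\ref{lemma:Lipschitz}, which says $\nabla_i f_\mu(x_s) \in [\tfrac{1}{2}\nabla_i f_\mu(x_k), \tfrac{3}{2}\nabla_i f_\mu(x_k)]$ for all $i \in B_t$ and every $x_s$ on the segment between $x_k$ and $\newx$; and (ii) the integral identity
\[
f_\mu(x_k) - f_\mu(\newx) = \int_0^1 \dotp{\nabla f_\mu(x_s),\, x_k - \newx}\, ds, \quad x_s = x_k + s(\newx - x_k),
\]
which lets us bound the objective decrease coordinate-by-coordinate. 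Coordinates $i \notin B_t$ contribute nothing because $\newx[i] = x_k[i]$, so everything reduces to the $i \in B_t$ terms.

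For the easy case $t \neq w-1$, the LHS is zero, so it suffices to show $f_\mu(\newx) \leq f_\mu(x_k)$. For each $i \in B_t$, the two factors $\nabla_i f_\mu(x_s)$ and $x_k[i] - \newx[i]$ have the same sign (both positive if $\xi_k^{(t)}[i] > 0$, both negative if $\xi_k^{(t)}[i] < 0$) by Lemma~\ref{lemma:Lipschitz} and the definition of the multiplicative update, so every integrand is non-negative. This disposes of $t \neq w-1$ and also lets us freely drop contributions from $i \in B_{w-1}\setminus L$ in the main case below, where $L = \{i : \nabla_i f_\mu(x_k) > 1\}$.

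For the main case $t = w-1$, first upper-bound the LHS by dropping $u$: since $u \geq 0$ and $\eta_k^{(t)} \geq 0$, $\dotp{\alpha\eta_k^{(t)}, x_k - u} \leq \alpha\sum_{i \in L}\eta_k^{(t)}[i]\, x_k[i]$. Next, because $\epsilon 2^{w-1} < 1 \leq \epsilon 2^w$, every $i \in L$ has $\xi_k[i] = 1$ lying in the top bucket, so $\xi_k^{(w-1)}[i] = 1$ and the update becomes $\newx[i] = x_k[i]e^{-\alpha}$; hence $x_k[i] - \newx[i] = x_k[i](1 - e^{-\alpha}) \geq \alpha x_k[i]/2$ since $\alpha \leq 1$. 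Combined with $\nabla_i f_\mu(x_s) \geq \tfrac{1}{2}\nabla_i f_\mu(x_k)$ from Lemma~\ref{lemma:Lipschitz} and $\eta_k^{(t)}[i] = \nabla_i f_\mu(x_k) - 1 \leq \nabla_i f_\mu(x_k)$, the integral yields
\[
f_\mu(x_k) - f_\mu(\newx) \;\geq\; \sum_{i \in L}\tfrac{1}{2}\nabla_i f_\mu(x_k)\cdot \tfrac{\alpha}{2}x_k[i] \;\geq\; \tfrac{\alpha}{4}\sum_{i \in L}\eta_k^{(t)}[i]\, x_k[i],
\]
which is the claimed inequality. The one step that actually requires care — and the main obstacle to getting the constants right — is fitting the factor of $4$: it comes from a $\tfrac{1}{2}$ for the pointwise gradient smoothness, a $\tfrac{1}{2}$ from $1 - e^{-\alpha} \geq \alpha/2$, and the truncation by $1$ absorbed into $\eta_k^{(t)}[i] \leq \nabla_i f_\mu(x_k)$; any looser bookkeeping in Lemma~\ref{lemma:Lipschitz} would inflate this constant and propagate through the main convergence theorem.
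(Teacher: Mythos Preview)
Your proof is correct and follows essentially the same approach as the paper's: the same case split on $t$, the same integral identity, the same use of Lemma~\ref{lemma:Lipschitz} to control signs and get the $\tfrac12$-factor, the same $1-e^{-\alpha}\ge \alpha/2$ step, and the same truncation $\eta_k^{(t)}[i]\le \nabla_i f_\mu(x_k)$ to close the factor of $4$. Your handling of the containment $L\subseteq B_{w-1}$ is in fact cleaner than the paper's (which writes the inclusion the wrong way round in a couple of places), but the argument is otherwise identical.
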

\begin{proof}
First observe that
\begin{align*}
f_\mu(x_k)-f_\mu(\newx)&=\int_0^1\dotp{\nabla f_\mu(\newx+\tau(x_k-\newx)),x_k-\newx} d\tau\\
&=\sum_{i\in B_t}\int_0^1\nabla_i f_\mu(\newx+\tau(x_k-\newx)) d\tau\times (x_k[i]-\newx[i])  ,
\end{align*}
where the last equality is because $x_k[i]-\newx[i]=0$ for $i\not\in B_t$. 
By Lemma~\ref{lemma:Lipschitz}, we have that $\deli{\newx+\tau(x_k-\newx)}$ has the same sign as $\deli{x_k}$ for all $\tau\in[0,1]$. 
Furthermore, by our update rule, $x_k[i]-\newx[i]$ also has the same sign as $\deli{x_k}$, and so we have $f_\mu(x_k)-f_\mu(\newx)\geq 0$ for all $t$. 
If $t<w-1$, then we know $\eta^{(t)}_k=\vec{0}$, and thus
\[
\dotp{\alpha\eta^{(t)}_k,x_k-u}=0\leq 4(f_\mu(x_k)-f_\mu(\newx))  .
\]
When $t=w-1$, let $B=\{i|\deli{x_k}>1\}\supseteq B_t$ be the set of variables with nonzero $\eta^{(t)}_k[i]$, we know for $i\in B$, $\xi^{(t)}_k[i]=1$, so $\newx[i]=x_k[i]\exp(-\alpha)$, and
\begin{align*}
f_\mu(x_k)-f_\mu(\newx)&=\sum_{i\in B_t}\int_0^1\nabla_i f_\mu(\newx+\tau(x_k-\newx)) d\tau\times (x_k[i]-\newx[i])\\
&\geq \sum_{i\in B}\int_0^1\nabla_i f_\mu(\newx+\tau(x_k-\newx)) d\tau\times (x_k[i]-\newx[i])\\
&\geq \sum_{i\in B}\frac{1}{2}\deli{x_k} \times x_k[i](1-\exp(-\alpha))\\
&\geq \sum_{i\in B}\frac{\alpha}{4}\deli{x_k} x_k[i]   .
\end{align*}
The first inequality is due to $B_t\subseteq B$, and every $i$ has non-negative contribution to the sum. The second inequality is from Lemma~\ref{lemma:Lipschitz}, and the last inequality is because $(1-\exp(-\alpha))>\alpha/2$ when $\alpha<1/10$. 
Then we have
\[
\dotp{\alpha\eta^{(t)}_k,x_k-u}\leq \sum_{i\in B}\alpha\deli{x_k} x_k[i]\leq 4(f_\mu(x_k)-f_\mu(\newx))  ,
\]
where the first inequality is because $\deli{x_k}>\eta^{(t)}_k\geq 0$ in this case, and $u\geq 0$.
\end{proof}

We see $f_\mu(x_k)-f_\mu(\newx)\geq 0$ for any $t=0,\ldots,w-1$, we have the following.

\begin{claim}\label{claim:approxFeasible}
$f_\mu(x_k)$ is non-increasing with $k$. 
By part $(3),(4)$ of Lemma~\ref{lemma:smoothing_properties}, $Ax_k\leq (1+\epsilon)\vone$, and $\vone^Tx_k\leq (1+\epsilon)\opt$ for all $k$.
\end{claim}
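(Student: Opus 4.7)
The plan is to prove the two assertions simultaneously by induction on $k$, because they are mutually dependent: the monotonicity argument (which flows from Lemma~\ref{lemma:gradient}'s proof) rests on Lemma~\ref{lemma:Lipschitz}, and Lemma~\ref{lemma:Lipschitz} requires the feasibility hypothesis $Ax_k \leq (1+\epsilon)\vone$. Conversely, the feasibility bound is extracted from part $(4)$ of Lemma~\ref{lemma:smoothing_properties}, which needs $f_\mu(x_k)\leq 0$. So the right induction hypothesis to carry along is the combined statement that $f_\mu(x_k)\leq f_\mu(x_{k-1}) \leq \cdots \leq f_\mu(x_0) \leq 0$ and hence $Ax_k \leq (1+\epsilon)\vone$ and $\vone^T x_k \leq (1+\epsilon)\opt$.

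For the base case $k=0$, part $(3)$ of Lemma~\ref{lemma:smoothing_properties} gives $f_\mu(x_0) \leq -\frac{1-\epsilon}{n} < 0$, so part $(4)$ immediately yields $Ax_0 \leq (1+\epsilon)\vone$ and $\vone^T x_0 \leq (1+\epsilon)\opt$. For the inductive step, assume the hypothesis holds at step $k$. Since $Ax_k \leq (1+\epsilon)\vone$, Lemma~\ref{lemma:Lipschitz} applies along the segment from $x_k$ to $x_{k+1}$. I then invoke the intermediate observation already developed inside the proof of Lemma~\ref{lemma:gradient}: writing
\[
f_\mu(x_k) - f_\mu(x_{k+1}) = \sum_{i\in B_t}\int_0^1 \nabla_i f_\mu\bigl(x_{k+1}+\tau(x_k - x_{k+1})\bigr)\,d\tau \cdot (x_k[i]-x_{k+1}[i]),
\]
Lemma~\ref{lemma:Lipschitz} shows that, for each $i \in B_t$, the gradient $\nabla_i f_\mu$ along the segment keeps the same sign as $\nabla_i f_\mu(x_k)$, and the update rule $x_{k+1}[i] = x_k[i]\exp(-\alpha\xi^{(t)}_k[i])$ makes the factor $x_k[i] - x_{k+1}[i]$ carry that same sign. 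Every summand is therefore nonnegative, which gives $f_\mu(x_{k+1}) \leq f_\mu(x_k)$.

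Combined with the inductive hypothesis, this yields $f_\mu(x_{k+1}) \leq 0$, and so part $(4)$ of Lemma~\ref{lemma:smoothing_properties} delivers $Ax_{k+1} \leq (1+\epsilon)\vone$ and $\vone^T x_{k+1} \leq (1+\epsilon)\opt$, closing the induction and establishing both halves of the claim. The main (only) obstacle is keeping clear that monotonicity and approximate feasibility must be carried forward together; once this is recognized, the argument is essentially a bookkeeping application of results already proved, since the crucial sign analysis has been done inside Lemma~\ref{lemma:gradient} and the feasibility bound inside Lemma~\ref{lemma:smoothing_properties}.
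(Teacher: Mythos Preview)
Your proof is correct and follows essentially the same route as the paper: the paper observes (in the sentence immediately preceding the claim) that $f_\mu(x_k)-f_\mu(x_{k+1})\geq 0$ was established inside the proof of Lemma~\ref{lemma:gradient}, and then invokes parts~(3) and~(4) of Lemma~\ref{lemma:smoothing_properties}. Your version improves on the paper's presentation by making explicit the induction needed to resolve the circular dependence between the feasibility hypothesis of Lemma~\ref{lemma:Lipschitz} and the monotonicity conclusion drawn from it---a point the paper leaves implicit.
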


\subsection{Mirror Descent Step}
\label{sxn:solver-mirror}

We now interpret the update step as a mirror descent step. We use the same proximal setup as in~\cite{AO15-parallel}. The distance generating function will be the generalized entropy function, where
\[
w(x)\myeq \sum_{i\in[n]}x[i]\log(x[i])-x[i]   ,
\]
and the corresponding Bregman divergence function will be
\[
V_x(y)=\sum_{i\in[n]}(y[i]\log\frac{y[i]}{x[i]}+x[i]-y[i])   .
\]
This is the standard proximal setup when one works with $L_1$-norm with the simplex as the feasible region. 
In our case, since the feasible region is $x\geq 0$, we don't have the standard strong convexity of the Bregman divergence, but one can verify
\begin{equation}\label{eq:Bregman}
V_x(y)=\sum_{i\in[n]}(y[i]\log\frac{y[i]}{x[i]}+x[i]-y[i])\geq \sum_{i\in[n]}\frac{(x[i]-y[i])^2}{2\max\{x[i],y[i]\}} .
\end{equation}
To interpret the update step as a mirror descent step, the following claim is used.  It is the same as Claim $3.7$ in~\cite{AO15-parallel} applied to different vectors. 
It is fairly straightforward to verity, and we include the proof in Appendix~\ref{App:proofs}.

\begin{restatable}{claim}{mirrorstep}
For all $t=0,\ldots,w-1$, we have
\[
\newx=\argmin_{z\geq 0}\{V_{x_k}(z)+\dotp{z-x_k,\alpha\xi^{(t)}_k}\}  .
\]
\end{restatable}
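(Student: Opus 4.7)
The plan is to verify the first-order optimality condition for the strictly convex minimization on the right-hand side and check that its unique critical point coincides with $\newx$. Writing $F(z) \myeq V_{x_k}(z) + \dotp{z-x_k,\alpha\xi^{(t)}_k}$, the first term is the generalized KL-type divergence defined just above the claim, which is strictly convex and separable on the strictly positive orthant, and the second term is linear; so $F$ is strictly convex on $\{z>0\}$ and the minimizer, provided it lies in the interior of the nonnegative orthant, is unique and characterized by $\nabla F(z)=\vec{0}$.

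Next I would differentiate coordinatewise. From the definition $V_{x_k}(z)=\sum_i(z[i]\log(z[i]/x_k[i])+x_k[i]-z[i])$, a direct calculation gives $\partial_i V_{x_k}(z)=\log(z[i]/x_k[i])$, so $\partial_i F(z)=\log(z[i]/x_k[i])+\alpha\xi^{(t)}_k[i]$. Setting this to zero and solving yields $z[i]=x_k[i]\exp(-\alpha\xi^{(t)}_k[i])$, which is precisely the definition of $\newx[i]$. Hence $\newx$ is the only point in $\{z>0\}$ where $\nabla F$ vanishes.

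Finally I would justify that the nonnegativity constraint $z\geq 0$ is inactive at this candidate, so that the unconstrained first-order condition is the right one to use. A short induction on $k$ shows that every coordinate of $x_k$ is strictly positive: by the choice of $x_0$ in Algorithm~\ref{alg:PP}, we have $x_0[i]>0$ for all $i$, and each update multiplies $x_k[i]$ by the strictly positive factor $\exp(-\alpha\xi^{(t)}_k[i])$. Consequently $\newx[i]>0$ for every $i$, so the critical point lies in the interior of $\{z\geq 0\}$ and is therefore the unique minimizer of $F$ over that region.

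There is no substantial obstacle here; the argument is just the standard ``mirror descent $=$ multiplicative update under KL divergence'' correspondence. The only subtlety worth flagging is that the logarithm in $V_{x_k}$ requires positivity of the iterates for the gradient computation to be meaningful, which the induction above handles immediately and also matches the invariant implicitly needed elsewhere in the algorithm.
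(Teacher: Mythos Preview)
Your proposal is correct and follows essentially the same approach as the paper: exploit separability, differentiate coordinatewise to obtain $\log(z[i]/x_k[i])+\alpha\xi^{(t)}_k[i]=0$, and solve. Your treatment is in fact slightly more careful, since you explicitly justify via induction that the iterates stay strictly positive so the constraint $z\geq 0$ is inactive, whereas the paper leaves this implicit.
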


Once we see the update step is indeed a mirror descent step, we can derive the following result from the textbook mirror descent analysis (or, e.g., Lemma $3.3$ in~\cite{AO15-parallel}).

\begin{lemma}\label{lemma:mirror}
For all $t=0,\ldots,w-1$, we have for any $u\geq 0$
\[
\dotp{\alpha\xi^{(t)}_k,x_{k}-u}\leq \alpha^2\opt+V_{x_{k}}(u)-V_{\newx}(u)   .
\]
\end{lemma}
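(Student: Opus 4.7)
The plan is to derive the bound by standard first-order mirror descent analysis using the variational characterization of $\newx$ from the preceding claim. First I would use the optimality condition for $\newx = \argmin_{z \geq 0}\{V_{x_k}(z) + \dotp{z - x_k, \alpha \xi_k^{(t)}}\}$: since this is the minimizer over the orthant, standard convex analysis gives that for every $u \geq 0$,
\[
\dotp{\nabla w(\newx) - \nabla w(x_k) + \alpha \xi_k^{(t)},\ u - \newx} \geq 0,
\]
where $\nabla w(z)_i = \log z[i]$ is the gradient of the generating function for the generalized entropy. Rearranging and applying the three-point identity for Bregman divergences (namely $\dotp{\nabla w(\newx) - \nabla w(x_k), u - \newx} = V_{x_k}(u) - V_{\newx}(u) - V_{x_k}(\newx)$, which can be verified directly from the definition of $V$) yields
\[
\dotp{\alpha \xi_k^{(t)},\ \newx - u} \leq V_{x_k}(u) - V_{\newx}(u) - V_{x_k}(\newx).
\]

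Next I would split the target inner product as $\dotp{\alpha \xi_k^{(t)}, x_k - u} = \dotp{\alpha \xi_k^{(t)}, x_k - \newx} + \dotp{\alpha \xi_k^{(t)}, \newx - u}$ and substitute the bound above. It then suffices to prove the ``regret'' inequality
\[
\dotp{\alpha \xi_k^{(t)},\ x_k - \newx} - V_{x_k}(\newx) \leq \alpha^2 \opt.
\]
This is the heart of the argument and the main place where the width-$1$ truncation pays off.

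To establish that regret inequality I would invoke the strong-convexity-like lower bound~\eqref{eq:Bregman}, $V_{x_k}(\newx) \geq \sum_i (x_k[i] - \newx[i])^2 / (2\max\{x_k[i], \newx[i]\})$, and apply the Young/AM-GM inequality coordinatewise:
\[
\alpha \xi_k^{(t)}[i]\,(x_k[i] - \newx[i]) \leq \frac{(x_k[i] - \newx[i])^2}{2\max\{x_k[i], \newx[i]\}} + \frac{\alpha^2 (\xi_k^{(t)}[i])^2 \max\{x_k[i], \newx[i]\}}{2}.
\]
Summing over $i$, the first terms cancel against $V_{x_k}(\newx)$. The second terms are controlled using two facts: by construction $|\xi_k^{(t)}[i]| \leq 1$, so $(\xi_k^{(t)}[i])^2 \leq 1$; and since $\newx[i] = x_k[i]\exp(-\alpha \xi_k^{(t)}[i])$ with $\alpha \leq 1/10$, we have $\max\{x_k[i], \newx[i]\} \leq e^{\alpha} x_k[i]$, which is within a $1 + O(\alpha)$ factor of $x_k[i]$. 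This reduces the right-hand side to $O(\alpha^2) \sum_i x_k[i]$.

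The final step uses Claim~\ref{claim:approxFeasible} (itself an immediate consequence of the gradient-descent viewpoint developed in Section~\ref{sxn:solver-gradient}), which guarantees $\vone^T x_k \leq (1+\epsilon)\opt$, so the sum is bounded by $(1+\epsilon)\opt \leq 2\opt$, and after absorbing constants into the $\alpha^2$ term the desired $\alpha^2 \opt$ bound falls out. The main obstacle I anticipate is the careful bookkeeping of constants in this last chain so that the stated inequality holds as written; in particular, one must be sure that the $1 + O(\alpha) + O(\epsilon)$ slack from using $\max\{x_k[i],\newx[i]\}$ in place of $x_k[i]$ and from $\vone^T x_k$ exceeding $\opt$ is absorbed by the slack in the $\alpha^2 \opt$ term. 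Apart from this bookkeeping, every step is essentially the textbook mirror descent regret analysis adapted to the entropy Bregman divergence on the nonnegative orthant, as already done in~\cite{AO15-parallel} for the unbucketed feedback vector.
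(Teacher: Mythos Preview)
Your proposal is correct and follows essentially the same route as the paper's proof: optimality condition for $\newx$, three-point identity for the Bregman divergence, the local strong-convexity bound~\eqref{eq:Bregman}, a coordinatewise Young-type inequality, and finally $|\xi_k^{(t)}[i]|\leq 1$ together with $\vone^Tx_k\leq (1+\epsilon)\opt$ from Claim~\ref{claim:approxFeasible}. The only cosmetic difference is that the paper bounds $\max\{x_k[i],\newx[i]\}\leq (1+\epsilon)x_k[i]$ rather than $e^{\alpha}x_k[i]$, and then tracks the constants explicitly to land on $\alpha^2\opt$; your anticipated bookkeeping concern is exactly the one step that needs care, and it goes through.
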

\begin{proof}
The lemma follows from the following chain of equalities and inequalities.
\begin{align*}
\dotp{\alpha\xi^{(t)}_k,x_{k}-u}&=\dotp{\alpha\xi^{(t)}_k,x_{k}-\newx}+\dotp{\alpha\xi^{(t)}_k,\newx-u}\\
&\leq \dotp{\alpha\xi^{(t)}_k,x_{k}-\newx}+\dotp{-\nabla V_{x_k}(\newx),\newx-u}\\
&\leq \dotp{\alpha\xi^{(t)}_k,x_{k}-\newx}+V_{x_{k}}(u)-V_{\newx}(u)-V_{x_{k}}(\newx) \\
&\leq \sum_{i\in[n]}\left(\alpha\xi^{(t)}_k[i](x_{k}[i]-\newx[i])-\frac{(x_k[i]-\newx[i])^2}{2\max\{x_k[i],\newx[i]\}}\right) +V_{x_{k}}(u)-V_{\newx}(u) \\
&\leq \sum_{i\in[n]}\frac{(\alpha\xi^{(t)}_k[i])^2\max\{x_k[i],\newx[i]\}}{2}+V_{x_{k}}(u)-V_{\newx}(u) \\
&\leq \frac{2}{3}\alpha^2\vone^Tx_k+V_{x_{k}}(u)-V_{\newx}(u)\\
&\leq \alpha^2\opt+V_{x_{k}}(u)-V_{\newx}(u)   .
\end{align*}
The first equality follows by adding and subtracting $\newx$. 
The first inequality is due to the the minimality of $\newx$, which gives
\[
\dotp{\nabla V_{x_{k}}(\newx)+\alpha\xi^{(t)}_k,u-\newx}\geq 0 \quad \forall u\geq 0   ,
\]
the second inequality is due to the standard three point property of Bregman divergence, that is $\forall x,y\geq 0$
\[
\dotp{-\nabla V_x(y),y-u}=V_x(u)-V_y(u)-V_x(y)   ,
\]
the third inequality is from~\eqref{eq:Bregman}, the fourth inequality follows from $2ab-a^2\leq b^2$, the next inequality is due to $\newx[i]\leq x_k[i](1+\epsilon)$, and $\xi_k^{(t)}[i]\leq 1$. 
The last inequality is by Claim~\ref{claim:approxFeasible}, $\vone^Tx_k\leq (1+\epsilon)\opt$.
\end{proof}

\subsection{Coupling of Gradient and Mirror Descent}
\label{sxn:solver-coupling}

In this section we show convergence using the results we derived by analyzing the update step as both a gradient descent step and a mirror descent step. 

Recall we break the gradients into small, medium and large components. The proof follows a similar approach as Lemma $3.4$ of~\cite{AO15-parallel}, where we bound the three components respectively, and telescope along all iterations. Furthermore, we divide the medium and large components into $w=\log(\ceil*{\frac{1}{\epsilon}})$ groups, as follows:
\[
\nabla f_\mu(x_k)=\zeta_k+\xi_k+\eta_k
\]
and
\[
\xi_k=w\E_t[\xi^{(t)}_k]  \quad\mbox{and}\quad   \eta_k=w\E_t[\eta^{(t)}_k]   .
\]
We bound the gap to optimality at iteration $k$, as follows:
\begin{align*}
\alpha(f_\mu(x_k)-f_\mu(u^*))
\leq& \dotp{\alpha\nabla f_\mu(x_k),x_k-u^*}\\
=&\alpha\dotp{\zeta_k,x_k-u^*}+\alpha\dotp{\xi_k,x_k-u^*}+\alpha\dotp{\eta_k,x_k-u^*}\\
=&\alpha\dotp{\zeta_k,x_k-u^*}+w\E_t[\dotp{\alpha\xi^{(t)}_k,x_k-u^*}]+w\E_t[\dotp{\alpha\eta^{(t)}_k,x_k-u^*}]   .
\end{align*}
The first line is due to convexity. 
The next two lines just break and regroup the terms. Now we upperbound each of the three terms
\begin{lemma}\label{lemma:componentBound}
\begin{enumerate}
We have the following.
\item $\dotp{\zeta_k,x_k-u^*}\leq 3\epsilon\opt$.
\item $\forall t$, $\dotp{\alpha\xi^{(t)}_k,x_k-u^*}\leq \alpha^2\opt+V_{x_k}(u^*)-V_{\newx}(u^*)$.
\item $\forall t$, $\dotp{\alpha\eta^{(t)}_k,x_k-u^*}\leq 4(f_\mu(x_k)-f_\mu(\newx))$.
\end{enumerate}
\end{lemma}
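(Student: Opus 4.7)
The plan is to handle the three parts separately; parts (2) and (3) are already done, and only part (1) requires any new argument. Specifically, part (2) is exactly the conclusion of Lemma~\ref{lemma:mirror} applied with $u = u^*$, and part (3) is exactly the conclusion of Lemma~\ref{lemma:gradient} with $u = u^*$ (using that $u^* = (1-\epsilon/2)x^* \geq 0$). So I would dispose of these two bullets in one or two lines by simply invoking the two lemmas from Sections~\ref{sxn:solver-gradient} and~\ref{sxn:solver-mirror}.

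For part (1), I would split the inner product and bound each piece by magnitudes. By definition of the truncation in~\eqref{eq:groups} (or rather in the decomposition immediately preceding it), $|\zeta_k[i]| \leq \epsilon$ for every coordinate $i$. Since $x_k \geq 0$ and $u^* \geq 0$, I can write
\[
\dotp{\zeta_k, x_k - u^*} \;=\; \sum_i \zeta_k[i]\, x_k[i] \;-\; \sum_i \zeta_k[i]\, u^*[i] \;\leq\; \epsilon\,\vone^T x_k \;+\; \epsilon\,\vone^T u^*,
\]
using $|\zeta_k[i]| \leq \epsilon$ on each of the two sums and the nonnegativity of $x_k[i]$ and $u^*[i]$.

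Now I bound the two $\ell_1$-norms. By Claim~\ref{claim:approxFeasible}, which follows from the monotonicity of $f_\mu(x_k)$ along the iterates combined with parts (3)--(4) of Lemma~\ref{lemma:smoothing_properties}, we have $\vone^T x_k \leq (1+\epsilon)\opt$. Also, directly from the definition $u^* = (1-\epsilon/2)x^*$ we get $\vone^T u^* \leq \vone^T x^* = \opt$. Putting these together gives
\[
\dotp{\zeta_k, x_k - u^*} \;\leq\; \epsilon(1+\epsilon)\opt + \epsilon\,\opt \;=\; \epsilon(2+\epsilon)\opt \;\leq\; 3\epsilon\,\opt,
\]
where the last step uses $\epsilon \leq 1/2$. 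There is no real obstacle here; the only mild subtlety is remembering that the approximate feasibility bound $\vone^T x_k \leq (1+\epsilon)\opt$ has already been established in Claim~\ref{claim:approxFeasible}, so the small-component bound does not require any fresh analysis of the iterate.
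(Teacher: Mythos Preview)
Your proposal is correct and matches the paper's own proof essentially line for line: parts (2) and (3) are disposed of by invoking Lemma~\ref{lemma:mirror} and Lemma~\ref{lemma:gradient} at $u=u^*$, and part (1) uses $|\zeta_k[i]|\leq\epsilon$, the bound $\vone^Tx_k\leq(1+\epsilon)\opt$ from Claim~\ref{claim:approxFeasible}, and $\vone^Tu^*\leq\opt$ to get $\epsilon(\vone^Tx_k+\vone^Tu^*)\leq 3\epsilon\opt$.
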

\begin{proof}
We establish each result in turn.
\begin{enumerate}
\item We know $|\zeta_k[i]|\leq \epsilon$ for all $i$, $\vone^Tx_k\leq (1+\epsilon)\opt$ from Claim~\ref{claim:approxFeasible}, and $\vone^Tu^*\leq \opt$, thus
\[
\dotp{\zeta_k,x_k-u^*}\leq \epsilon(\vone^Tx_k+\vone^Tu^*)\leq 3\epsilon\opt
\]
\item This is just Lemma~\ref{lemma:mirror} applied to $u=u^*$.
\item This is just Lemma~\ref{lemma:gradient} applied to $u=u^*$.
\end{enumerate}
\end{proof}

\noindent
Given this, we have
\begin{align*}
\alpha(f_\mu(x_k)-f_\mu(u^*)) \leq& \alpha\dotp{\zeta_k,x_k-u^*}+w\E_t[\dotp{\alpha\xi^{(t)}_k,x_k-u^*}]+w\E_t[\dotp{\alpha\eta^{(t)}_k,x_k-u^*}] \nonumber \\
\leq & 3\alpha\epsilon\opt+w\E_t[\alpha^2\opt+V_{x_k}(u^*)-V_{\newx}(u^*)]+4wf_\mu(x_k)-4w\E_t[f_\mu(\newx)]   .
\end{align*}
The above inequality holds for $x_k$ following any sequence of random choices (i.e., $t$'s in the first $k-1$ iterations). 
Let $I_k$ denote the random choices over the first $k$ iterations, and we take expectation of the above inequality to~get
\begin{align}
\label{eq:iterationGap}
\alpha(\E_{I_k}[f_\mu(x_k)]-f_\mu(u^*)) \leq & 3\alpha\epsilon\opt+\alpha^2w\opt+ w\E_{I_{k}}[V_{x_k}(u^*)]-w\E_{I_{k+1}}[V_{\newx}(u^*)] \nonumber\\
& +4w\E_{I_k}[f_\mu(x_k)]-4w\E_{I_{k+1}}[f_\mu(\newx)]    .
\end{align}
Telescoping~\eqref{eq:iterationGap} for $k=0,\ldots,T-1$, we get
\begin{align}
\alpha \sum_{k=0}^{T-1}(\E_{I_k}[f_\mu(x_k)]-f_\mu(u^*))\leq& 3T\alpha\epsilon\opt + wT\alpha^2\opt +wV_{x_0}(u^*)+4wf_\mu(x_0)-4w\E_{I_T}[f_\mu(x^{(t)}_T)] \nonumber\\
\label{eq:telescope}
\leq& 3T\alpha\epsilon\opt + wT\alpha^2\opt +2w\log 2n\opt-4w\E_{I_T}[f_\mu(x^{(t)}_T)]   ,
\end{align}
where the last inequality is due to $f_\mu(x_0)<0$, and 
\begin{claim}
$V_{x_0}(u^*)\leq 2\log 2n\opt$
\end{claim}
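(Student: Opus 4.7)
The plan is to bound the two parts of the Bregman divergence separately, using feasibility of $x^*$ to control the ratio $u^*[i]/x_0[i]$ and the normalization of $x_0$ to control the additive term.

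First, I would exploit packing feasibility to bound $x^*[i]$ coordinate-wise. Since $Ax^* \leq \vone$, for every $j$ we have $A_{ji}\,x^*[i] \leq (Ax^*)_j \leq 1$; taking the tightest constraint gives $x^*[i] \leq 1/\maxAi$. Recalling the definitions $u^* = (1-\epsilon/2)x^*$ and $x_0[i] = (1-\epsilon/2)/(n\maxAi)$, this yields the key pointwise inequality
\[
\frac{u^*[i]}{x_0[i]} \;=\; n\,\maxAi\,x^*[i] \;\leq\; n,
\]
so $\log(u^*[i]/x_0[i]) \leq \log n$ for every coordinate $i$.

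Next I would plug this into the explicit formula for the Bregman divergence,
\[
V_{x_0}(u^*) \;=\; \sum_{i\in[n]}\!\Bigl(u^*[i]\log\tfrac{u^*[i]}{x_0[i]} + x_0[i] - u^*[i]\Bigr)
\;\leq\; (\log n)\cdot \vone^T u^* + \vone^T x_0,
\]
where I dropped the (non-positive-part-contributing) $-u^*[i]$ term by upper-bounding it by $0$. For the first summand, $\vone^T u^* = (1-\epsilon/2)\vone^T x^* \leq \opt$. For the second, since by the standing assumption $\min_i \maxAi = 1$ we have $\maxAi \geq 1$ for all $i$, so each $x_0[i] \leq 1/n$ and $\vone^T x_0 \leq 1 \leq \opt$ (using $\opt \geq 1$ from the earlier range lemma).

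Combining these bounds,
\[
V_{x_0}(u^*) \;\leq\; (\log n)\opt + \opt \;=\; (\log n + 1)\opt \;=\; \log(en)\opt \;\leq\; 2\log(2n)\opt,
\]
where the last inequality holds because $en \leq 4n^2 = (2n)^2$ whenever $n \geq 1$. The only step requiring any care is the pointwise bound $u^*[i]/x_0[i] \leq n$; once that is in hand, the rest is just using $\opt \geq 1$ and the normalization of $x_0$, so I do not anticipate any serious obstacle.
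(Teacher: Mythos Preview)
Your proof is correct and follows essentially the same route as the paper: drop the $-u^*[i]$ term, bound the ratio $u^*[i]/x_0[i]$ via packing feasibility, and control $\vone^T x_0$ using $\maxAi\geq 1$ and $\opt\geq 1$. The only cosmetic difference is that you cancel the $(1-\epsilon/2)$ factors exactly to get $u^*[i]/x_0[i]\leq n$, whereas the paper bounds $u^*[i]\leq 1/\maxAi$ directly and obtains the slightly looser ratio bound $n/(1-\epsilon/2)\leq 2n$; your sharper intermediate step just shifts the slack to the final inequality $\log(en)\leq 2\log(2n)$.
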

\begin{proof}
\begin{align*}
V_{x_0}(u^*)=\sum_i u^*[i]\log\frac{u^*[i]}{x_0[i]}+x_0[i]-u^*[i] &\leq\sum_i u^*[i]\log\frac{u^*[i]}{x_0[i]}+x_0[i]\\
&\leq \sum_i u^*[i]\log\frac{1/\maxAi}{(1-\epsilon/2)/(n\maxAi)}+\frac{1-\epsilon/2}{n\maxAi}\\
&\leq \vone^Tu^*\log(2n)+1\leq 2\log(2n)\opt
\end{align*}
where we have used $u^*_i\leq \frac{1}{\maxAi}$ in the second line, since $Au^*\leq \vone$. The third line is due to $\vone^Tu^*\leq \opt$, and $\opt\geq 1$.
\end{proof}

\noindent
We prove that $E_{I_T}[f_\mu(x^{(t)}_T)]\leq -(1-5\epsilon)\opt$ (i.e. Theorem~\ref{thm:main}) by contradiction. 
If $E_{I_T}[f_\mu(x^{(t)}_T)]> -(1-5\epsilon)\opt$, we have $-4wE_{I_T}[f_\mu(x^{(t)}_T)]\leq 4w\opt$. If we divide both sides of~\eqref{eq:telescope} by $\alpha T$, then we have
\begin{align*}
&
\hspace{-20mm}
\frac{1}{T\alpha}\sum_{k=0}^{T-1}\alpha(\E_{I_k}[f_\mu(x_k)]-f_\mu(u^*))\\
\leq& \frac{1}{T\alpha}(3T\alpha\epsilon\opt + wT\alpha^2\opt +2w\log 2n\opt-4w\E_{I_T}[f_\mu(x^{(t)}_T)])\\
\leq& 3\epsilon\opt+w\alpha\opt+\frac{2w\log 2n}{T\alpha}\opt+\frac{4w}{T\alpha}\opt   .
\end{align*}
Recall $\alpha=\mu/20=\frac{\epsilon}{20\log \frac{mn}{\epsilon}}$, we have $w\alpha\leq \epsilon/20$. 
By our choice of $T=\frac{10w\log 2n}{\alpha\epsilon}$, we have $\frac{2w\log 2n}{T\alpha}<\epsilon$, and $\frac{4w}{T\alpha}<\epsilon$. 
Thus
\[
\frac{1}{T}\sum_{k=0}^{T-1}(\E_{I_k}[f_\mu(x_k)]-f_\mu(u^*))\leq 4\epsilon\opt
\]
From part $(1)$ of Lemma~\ref{lemma:smoothing_properties}, we know $f_\mu(u^*)\leq -(1-\epsilon)\opt$, which suggests there exists a $x_k$, such that $\E_{I_T}[f_\mu(x_k)]\leq -1(1-5\epsilon)\opt$. 
This gives a contradiction of $\E_{I_T}[f_\mu(x^{(t)}_T)]> -(1-5\epsilon)\opt$ by Claim~\ref{claim:approxFeasible}, as $f_\mu(x_k)$ is non-increasing, so is $\E_{I_k}[f_\mu(x_k)]$.

The running time guarantee in Theorem~\ref{thm:main} comes directly from our choice of $T$, and that in each iteration of Algorithm~\ref{alg:PP}, the gradients can be computed in $O(\log N)$ distributed iterations and $O(N)$ total work.

\section{Analysis of Algorithm for Covering LPs: Proof of Theorem~\ref{thm:dual}}
\label{sxn:proof-dual-thm}

In this section, we will provide a proof of Theorem~\ref{thm:dual}.
To do so, first recall some properties from the analysis of the packing algorithm:
\begin{enumerate}
\item $\forall u\geq 0$,
\begin{equation}\label{eq:covSolCondition}
\begin{array}{r@{}l}
\frac{1}{T}\E[\sum_k f_\mu(x_k)]-f_\mu(u) &{}\leq \frac{1}{T}\E[\sum_{k=0}^{T-1}\dotp{\nabla f_\mu(x_k),x_k-u}]\\
 &{} \leq \frac{4w}{\alpha T}(f_\mu(x_0)-E[f_\mu(x_T)])+\frac{w}{\alpha T}V_{x_0}(u)+2\epsilon\opt + \epsilon \vone^T u  .
\end{array}
\end{equation}
This is simply telescoping~\eqref{eq:iterationGap} for a general $u\geq 0$ instead of $u^*$. Notice Lemma~\ref{lemma:componentBound}(1) for general $u\geq 0$ gives $\dotp{\zeta_k,x_k-u}\leq \epsilon\vone^Tu+\epsilon \vone^Tx_k\leq 2\epsilon\opt + \vone^Tu$.
\item $Ax_k\leq (1+\epsilon)\vone, \vone^Tx_k\leq (1+\epsilon)\opt$ and $f_\mu(x_k)\geq -(1+\epsilon)\opt$ hold for all $k$ and any outcome of random choices. This follows from Claim~\ref{claim:approxFeasible}, and Lemma~\ref{lemma:smoothing_properties}(2). Also $x_0[i]=\frac{1-\epsilon/2}{n\maxAi}$ for each $i\in [n]$, and $f_\mu(x_0)\leq 0$.
\end{enumerate}

\noindent
We start with the following lemma which states that $\vone^T\bar{y}$ is close to $\opt$ on expectation.

\begin{lemma}
\label{lemma:bary}
For any $T\geq \frac{6w}{\alpha\epsilon}$, we have $\E[\vone^T\bar{y}]\leq (1+5\epsilon)\opt$
\end{lemma}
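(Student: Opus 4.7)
The plan is to control $\vone^T \overrightarrow{p(x_k)}$ at each iteration and then average. The key algebraic identity I will use is
\[
\sum_j p_j(x_k)(Ax_k)_j = x_k^T A^T \overrightarrow{p(x_k)} = \vone^T x_k + \dotp{\nabla f_\mu(x_k),\, x_k},
\]
which comes directly from $\nabla f_\mu(x) = -\vone + A^T\overrightarrow{p(x)}$. If every coordinate of $\overrightarrow{p(x_k)}$ were supported on an exactly tight constraint $(Ax_k)_j = 1$, this identity would give $\vone^T \overrightarrow{p(x_k)}$ in closed form; my job is to show this holds up to $O(\epsilon)\opt$ slack.

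To turn the exact identity into a usable bound I will split the $p_j(x_k)$'s by size. Call $j$ \emph{large} if $p_j(x_k) \geq \epsilon/m$ and \emph{small} otherwise. The small coordinates trivially contribute at most $\epsilon$ to $\vone^T \overrightarrow{p(x_k)}$. For a large coordinate, the formula $p_j(x_k) = \exp((Ax_k)_j/\mu - 1/\mu)$ combined with $\mu = \epsilon/(4\log(nm/\epsilon))$ forces $(Ax_k)_j \geq 1 - \mu\log(m/\epsilon) \geq 1 - \epsilon/4$, so that $(1-\epsilon/4)\sum_{j\text{ large}} p_j(x_k) \leq \sum_j p_j(x_k)(Ax_k)_j$. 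Combined with the identity above, this produces the per-iteration bound
\[
\vone^T \overrightarrow{p(x_k)} \leq \frac{\vone^T x_k + \dotp{\nabla f_\mu(x_k),\, x_k}}{1-\epsilon/4} + \epsilon.
\]
Averaging over $k$ and taking expectation, two ingredients close the argument: Claim~\ref{claim:approxFeasible} provides the deterministic bound $\vone^T x_k \leq (1+\epsilon)\opt$, and~\eqref{eq:covSolCondition} applied with the cheap comparator $u = 0$ yields $\frac{1}{T}\E\sum_k \dotp{\nabla f_\mu(x_k),\, x_k} = O(\epsilon)\opt$ as soon as $T \geq 6w/(\alpha\epsilon)$, using $V_{x_0}(0) = \vone^T x_0 < 1$ and $f_\mu(x_0) - \E[f_\mu(x_T)] \leq (1+\epsilon)\opt$. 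Plugging both in and simplifying yields $\E[\vone^T \bar{y}] \leq (1+5\epsilon)\opt$.

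The main obstacle is the choice of comparator in~\eqref{eq:covSolCondition}. The natural choice $u = u^*$ from the Theorem~\ref{thm:main} analysis would let me bound $\frac{1}{T}\sum_k f_\mu(x_k)$ very tightly, but drags in the term $\frac{w}{\alpha T}V_{x_0}(u^*) = \Theta(\log(2n)/(\alpha T))\opt$, which the lemma's iteration budget $T \geq 6w/(\alpha\epsilon)$ cannot absorb (that is precisely why the covering-LP pipeline in Section~\ref{sxn:main-results-covering} imposes the stronger condition $T \geq 2w^2\log(n/\epsilon)/\epsilon^2$, handled by Lemma~\ref{lemma:martingale} rather than here). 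Picking $u = 0$ sacrifices the sharp comparison of $\sum_k f_\mu(x_k)$ to $T f_\mu(u^*)$, but collapses the $V_{x_0}(u)$ term to something negligible; this suffices precisely because the dichotomy above requires only $\dotp{\nabla f_\mu(x_k),\, x_k}$ to be small on average, not $f_\mu(x_k)$ itself to be near $-\opt$ on average.
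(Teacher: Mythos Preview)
Your proposal is correct and reconstructs essentially the argument the paper defers to (Lemma~D.1 of~\cite{AO15-parallel}): apply~\eqref{eq:covSolCondition} with $u=0$, use $V_{x_0}(0)=\vone^Tx_0\leq 1\leq \opt$ and $f_\mu(x_0)-\E[f_\mu(x_T)]\leq (1+\epsilon)\opt$ to make the right-hand side $O(\epsilon)\opt$ under the hypothesis $T\geq 6w/(\alpha\epsilon)$, and convert the resulting bound on $\tfrac{1}{T}\sum_k\dotp{\nabla f_\mu(x_k),x_k}$ into a bound on $\tfrac{1}{T}\sum_k\vone^T\overrightarrow{p(x_k)}$ via the identity $\sum_j p_j(x_k)(Ax_k)_j=\vone^Tx_k+\dotp{\nabla f_\mu(x_k),x_k}$ together with your large/small dichotomy on the $p_j(x_k)$. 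The constant bookkeeping is a touch tight for exactly $(1+5\epsilon)$ but lands comfortably within $(1+O(\epsilon))\opt$, which is all the downstream use of the lemma requires.
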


\noindent
The proof of this lemma follows directly from Lemma $D.1$ of~\cite{AO15-parallel}, only with the additional $w=\log(\ceil*{\frac{1}{\epsilon}})$ due to our dynamic grouping, and the expectation. 
The expectation holds since all the inequalities used in the proof hold universally (i.e., in any outcome of the random choices). 
We omit the detailed proof here, and encourage interested readers to look at~\cite{AO15-parallel}.

Now we look at the $i$-th constraint of the covering LP, which corresponds to the variable $x[i]$ in the dual packing instance. 
Let $Z_k^{(i)}$ be the indicator random variable of whether $x[i]$ is in the group being updated in iteration $k$ of Algorithm~\ref{alg:PP}, and let
\[
S_i=w\sum_{k=0}^{T-1}Z_k^{(i)}(\min\{\deli{x_k},1\}+\epsilon)  .
\]
We can obtain a lower bound on the random variable $S_i$ as follows:
\begin{lemma}
\begin{equation}
\label{eq:detBound}
S_i\geq -\frac{w\log 2n}{\alpha}   \qquad \forall i.
\end{equation}
\end{lemma}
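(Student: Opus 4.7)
The plan is to prove the bound by exploiting the explicit update rule $x_{k+1}[i] = x_k[i]\exp(-\alpha\xi_k^{(t_k)}[i])$, where $t_k$ denotes the bucket chosen at iteration $k$. The inequality is deterministic, so I will work with an arbitrary fixed realization of the random choices.

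First I would rewrite $S_i$ in terms of the actual update increments. When $Z_k^{(i)} = 1$, the coordinate $i$ sits in the selected bucket, so $\xi_k^{(t_k)}[i] = \xi_k[i] \neq 0$, and inspection of the definition of $\xi_k[i]$ (together with Lemma~\ref{lemma:smoothing_properties}(6), which gives $\nabla_i f_\mu(x_k) \geq -1$) shows that in this case $\xi_k[i] = \min\{\nabla_i f_\mu(x_k),1\}$. When $Z_k^{(i)} = 0$, both $\xi_k^{(t_k)}[i]$ and the corresponding summand in $S_i$ vanish. Hence
\[
S_i \;=\; w\sum_{k=0}^{T-1}\xi_k^{(t_k)}[i] \;+\; w\epsilon\sum_{k=0}^{T-1}Z_k^{(i)} \;\geq\; w\sum_{k=0}^{T-1}\xi_k^{(t_k)}[i],
\]
since the second sum is non-negative.

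Next I would unfold the multiplicative update to get a telescoping identity: taking logarithms,
\[
\log x_T[i] \;=\; \log x_0[i] - \alpha\sum_{k=0}^{T-1}\xi_k^{(t_k)}[i],
\]
which gives $\sum_{k=0}^{T-1}\xi_k^{(t_k)}[i] = \alpha^{-1}\log(x_0[i]/x_T[i])$. To conclude, it suffices to upper bound the ratio $x_T[i]/x_0[i]$. By Claim~\ref{claim:approxFeasible} we have $Ax_T \leq (1+\epsilon)\vone$, so for any $j$ with $A_{ji} > 0$, $x_T[i] \leq (1+\epsilon)/A_{ji}$; maximizing over $j$ yields $x_T[i] \leq (1+\epsilon)/\|A_{:i}\|_\infty$. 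Since $x_0[i] = (1-\epsilon/2)/(n\|A_{:i}\|_\infty)$, the $\|A_{:i}\|_\infty$ factors cancel and
\[
\frac{x_T[i]}{x_0[i]} \;\leq\; \frac{(1+\epsilon)\,n}{1-\epsilon/2} \;\leq\; 2n
\]
for $\epsilon \in (0,1/2]$. Therefore $\sum_k \xi_k^{(t_k)}[i] \geq -\log(2n)/\alpha$, giving $S_i \geq -w\log(2n)/\alpha$ as required.

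I do not expect a significant obstacle here; the only mildly subtle point is the bookkeeping needed to identify $\min\{\nabla_i f_\mu(x_k),1\}$ with $\xi_k^{(t_k)}[i]$ on the event $\{Z_k^{(i)} = 1\}$, which requires inspecting the three gradient regimes in the definition of $\xi_k[i]$ and using $\nabla_i f_\mu(x_k) \geq -1$. The rest is a direct consequence of the multiplicative update and the feasibility bound $Ax_k \leq (1+\epsilon)\vone$ inherited from the packing analysis.
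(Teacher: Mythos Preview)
Your proof is correct and follows essentially the same approach as the paper: both arguments reduce to the inequality $S_i/w \geq \sum_{k}\xi_k^{(t_k)}[i]$, then telescope the multiplicative update to express this sum as $-\alpha^{-1}\log(x_T[i]/x_0[i])$, and finally bound $x_T[i] \leq (1+\epsilon)/\|A_{:i}\|_\infty$ via the feasibility guarantee $Ax_T \leq (1+\epsilon)\vone$ from Claim~\ref{claim:approxFeasible}. Your write-up is in fact more explicit than the paper's in justifying the identification $Z_k^{(i)}\cdot\min\{\nabla_i f_\mu(x_k),1\} = \xi_k^{(t_k)}[i]$ through the case analysis on the gradient regimes.
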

\begin{proof}
Using the notations of Algorithm~\ref{alg:PP}, we have, for all $i$, $\frac{S_i}{w}\geq \sum_{k=0}^{T-1}\xi_k^{(t)}[i]$. 
We know the cumulative update on variable $x[i]$ must be bounded, due to Claim~\ref{claim:approxFeasible}, $x_k[i]\leq \frac{1+\epsilon}{\maxAi}$ for all $k$, and in particular 
\[
\frac{1+\epsilon}{\maxAi}\geq x_T[i]=x_0[i]\exp(-\alpha \sum_{k=0}^{T-1}\xi_k^{(t)}[i])\geq x_0[i]\exp(-\alpha \frac{S_i}{w})=\frac{1-\epsilon/2}{n\maxAi}\exp(-\alpha \frac{S_i}{w})   .
\]
The bound in~\eqref{eq:detBound} follows.
\end{proof}
Notice that the slackness of the $i$-th covering constraint with the solution $\bar{y}$ is
\begin{align*}
(A^T\bar{y})_i-1+\epsilon=&\frac{1}{T}\sum_{k=0}^{T-1}(A^T\overrightarrow{p(x_k)})_i-1+\epsilon\\
=&\frac{1}{T}\sum_{k=0}^{T-1}\deli{x_k}+\epsilon\\
\geq & \frac{1}{T}\sum_{k=0}^{T-1}\min\{\deli{x_k},1\}+\epsilon\\
\myeq & \frac{1}{T} U_i   ,
\end{align*}
with the definition of the random variable $U_i=\sum_{k=0}^{T-1}\min\{\deli{x_k},1\}+\epsilon$. 
If the $i$-th variable is updated in all iterations, i.e., $Z_k^{(i)}=1$ for all $k$, then we have $U_i=S_i$ in that case, and when $T\geq \frac{6w\log 2n}{\alpha \epsilon}$, we have that
\[
(A^T\bar{y})_i-1+\epsilon\geq \frac{1}{T}S_i\geq -\frac{w\log 2n}{\alpha T}\geq -\epsilon   .
\]
Thus, we know that $(A^T\bar{y})_i\geq 1-2\epsilon$, for all $i$, which means all covering constraints are approximately feasible. 
However, we don't always update variable $i$ in all iterations of Algorithm~\ref{alg:PP}, and so we need to bound the difference $S_i-U_i$. 
We do so with the following lemma.

\begin{lemma}
\label{lemma:martingale}
For any $T\geq \frac{2w^2\log\frac{n}{\epsilon} }{\epsilon^2}$, we have $\Pr[S_i-U_i\geq \epsilon T]\leq \frac{\epsilon}{n}$.
\end{lemma}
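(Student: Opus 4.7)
The plan is to view $S_i - U_i$ as a sum that is close to a martingale difference sequence and apply Azuma--Hoeffding.

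First I would rewrite
\[
S_i - U_i = \sum_{k=0}^{T-1} M_k, \quad \text{where } M_k \myeq \bigl(w Z_k^{(i)} - 1\bigr)\bigl(\min\{\deli{x_k}, 1\} + \epsilon\bigr).
\]
Let $I_{k-1}$ denote all random choices made in iterations $0,\ldots,k-1$; then $x_k$, the gradient $\deli{x_k}$, and hence both $c_k \myeq \min\{\deli{x_k}, 1\} + \epsilon$ and the bucket into which coordinate $i$ falls at iteration $k$, are deterministic functions of $I_{k-1}$.

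Next I would compute $\E[M_k \mid I_{k-1}]$ via a case split on the gradient. In case (i), if $|\deli{x_k}| \leq \epsilon$, then $i$ belongs to no bucket, so $Z_k^{(i)} = 0$ deterministically and $M_k = -c_k \in [-2\epsilon, 0]$. In case (ii), if $|\deli{x_k}| > \epsilon$, then $i$ lies in exactly one of the $w$ buckets, and since the bucket at iteration $k$ is drawn uniformly from $\{0,\ldots,w-1\}$ independently of $I_{k-1}$, $\Pr[Z_k^{(i)} = 1 \mid I_{k-1}] = 1/w$ and hence $\E[M_k \mid I_{k-1}] = 0$. Thus $\E[M_k \mid I_{k-1}] \leq 0$ in both cases, and the centered variables $D_k \myeq M_k - \E[M_k \mid I_{k-1}]$ form a martingale difference sequence for which $\sum_k M_k \leq \sum_k D_k$.

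Then I would bound the increments: in case (i), $D_k = 0$, while in case (ii), $|D_k| = |M_k| \leq (w-1)|c_k| \leq (w-1)(1+\epsilon) \leq w$, using $|c_k| \leq 1+\epsilon$ (which follows from $\deli{x_k} \geq -1$, part $(6)$ of Lemma~\ref{lemma:smoothing_properties}) together with $\epsilon w = O(\epsilon \log(1/\epsilon)) \leq 1$ for $\epsilon$ small enough. Azuma--Hoeffding then yields
\[
\Pr\bigl[S_i - U_i \geq \epsilon T\bigr] \leq \Pr\Bigl[\sum_{k=0}^{T-1} D_k \geq \epsilon T\Bigr] \leq \exp\!\left(-\frac{(\epsilon T)^2}{2 T w^2}\right) = \exp\!\left(-\frac{\epsilon^2 T}{2 w^2}\right),
\]
which is at most $\epsilon/n$ precisely when $T \geq 2 w^2 \log(n/\epsilon)/\epsilon^2$.

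The main subtlety will be the conditional-expectation computation: one must carefully argue that, given $I_{k-1}$, the bucket containing $i$ (if any) is already fixed, while the algorithm's next bucket selection is uniform over all $w$ buckets, so in the ``large gradient'' case the $1/w$ probability of $i$'s bucket being selected exactly cancels the $w$ scaling built into the definition of $S_i$. Everything else---checking bounded increments and invoking Azuma--Hoeffding---is routine.
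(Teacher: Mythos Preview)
Your proposal is correct and follows essentially the same approach as the paper: write $S_i - U_i$ as a sum whose partial sums form a (super)martingale with bounded increments, then apply Azuma--Hoeffding. Your case split on small versus large gradient is in fact more careful than the paper's own argument, which tacitly assumes $\Pr[Z_k^{(i)}=1\mid I_{k-1}]=1/w$ in all iterations; you correctly note that when $|\deli{x_k}|\le\epsilon$ the indicator is deterministically zero and the conditional expectation of the increment is nonpositive, so the supermartingale direction still points the right way for the one-sided tail bound.
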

\begin{proof}
The randomness of $U_i$ and $S_i$ comes from the random choice of which group to update in each iteration of Algorithm~\ref{alg:PP}. 
Let 
\[
D_k^{(i)}=w\sum_{k'\leq k}Z_{k'}^{(i)}(\min\{\deli{x_k},1\}+\epsilon)-\sum_{k'\leq k}\min\{\deli{x_k},1\}+\epsilon  .
\]
Let $G_k$ be the random choice (i.e., the group to update) made at $k$-th iteration of Algorithm~\ref{alg:PP}. 
Since $Z_{k}^{(i)}$ is an indicator random variables with probability of $\frac{1}{w}$ being $1$, and it is independent from $G_0,\ldots,G_{k-1}$, $D_k^{(i)}$ is a martingale with respect to $G_k$, as
\[
E[D_k^{(i)}|G_0,\ldots,G_{k-1}]=D_{k-1}^{(i)}+\frac{1}{w}w(\min\{\deli{x_k},1\}+\epsilon)-(\min\{\deli{x_k},1\}+\epsilon)=D_{k-1}^{(i)}  ,
\]
 and so we have
\[
D_0^{(i)}=E[S_i-U_i]=0,\quad D_T^{(i)}=S_i-U_i    .
\]
Furthermore, $|D_k^{(i)}-D_{k+1}^{(i)}|\leq w$ for all $k$, so we can apply Azuma's inequality, and get
\[
\Pr[S_i-U_i\geq \epsilon T]\leq \exp(\frac{\epsilon^2T^2}{2Tw^2})\leq \frac{\epsilon}{n}  ,
\]
from which the lemma follows.
\end{proof}
The above lemma, Lemma~\ref{lemma:martingale}, shows that with high probability, $\bar{y}$ satisfies the $i$-th covering constraint up to $-3\epsilon$. 
In the rare case it doesn't, we use Algorithm~\ref{alg:fix} to fix it, and get $\bar{y}'$. We show on expectation this step doesn't add too much to the total cost.
\begin{lemma}
$\E[\vone^T\bar{y'}]\leq (1+6\epsilon)\opt$.
\end{lemma}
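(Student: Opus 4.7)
The plan is to write $\E[\vone^T\bar{y}'] = \E[\vone^T\bar{y}] + \E[\vone^T\bar{y}' - \vone^T\bar{y}]$, invoke Lemma~\ref{lemma:bary} for the first term (giving $(1+5\epsilon)\opt$), and then show that the expected extra mass injected by the fixing step of Algorithm~\ref{alg:fix} is at most $\epsilon\opt$. Summing gives the stated $(1+6\epsilon)\opt$ bound.

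By inspecting Algorithm~\ref{alg:fix}, each constraint $i$ with $\lambda_i \leq -2\epsilon$ contributes exactly $\frac{-\lambda_i}{A_{i,j}} = \frac{-\lambda_i}{\maxAi}$ to $\vone^T\bar{y}'$, and the remaining constraints contribute zero. Since $A,\bar{y}\geq 0$, the quantity $\lambda_i = (A^T\bar{y})_i - 1 + \epsilon$ satisfies $\lambda_i \geq \epsilon - 1$, so whenever a constraint is violated we have $-\lambda_i \leq 1$. Hence
\[
\vone^T\bar{y}' - \vone^T\bar{y} \;\leq\; \sum_{i=1}^m \frac{1}{\maxAi}\,\mathbf{1}[\lambda_i \leq -2\epsilon].
\]

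Next, I would bound $\Pr[\lambda_i \leq -2\epsilon]$ via Lemma~\ref{lemma:martingale}. Since $T\lambda_i = U_i$, and the deterministic bound~\eqref{eq:detBound} combined with the hypothesis $T \geq \frac{6w\log 2n}{\alpha\epsilon}$ gives $S_i/T \geq -\epsilon$, the event $\{\lambda_i \leq -2\epsilon\}$ forces $U_i \leq -2\epsilon T \leq S_i - \epsilon T$, i.e.\ $S_i - U_i \geq \epsilon T$. Lemma~\ref{lemma:martingale} then yields $\Pr[\lambda_i \leq -2\epsilon] \leq \epsilon/n$. To bound $\sum_i \tfrac{1}{\maxAi}$, I would exhibit the primal-feasible point $x$ with $x[i] = \frac{1}{n\maxAi}$: since $A_{ji} \leq \maxAi$, we have $(Ax)_j \leq \sum_i \frac{1}{n} = 1$, so $\sum_i \frac{1}{n\maxAi} = \vone^T x \leq \opt$, i.e.\ $\sum_i \frac{1}{\maxAi} \leq n\opt$. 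Taking expectations,
\[
\E[\vone^T\bar{y}' - \vone^T\bar{y}] \;\leq\; \sum_i \frac{1}{\maxAi}\cdot \frac{\epsilon}{n} \;\leq\; \epsilon\opt,
\]
and adding the Lemma~\ref{lemma:bary} bound finishes the argument.

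I do not anticipate any deep obstacle: the argument is really a two-line probability calculation hinging on the cancellation between $\Pr[\cdot] \leq \epsilon/n$ and $\sum_i \tfrac{1}{\maxAi} \leq n\opt$. The main bookkeeping task is cleanly translating the feasibility-violation event into the martingale tail event $\{S_i - U_i \geq \epsilon T\}$, which requires both halves of the iteration-count schedule $T \geq \max\{\tfrac{6w}{\alpha\epsilon},\tfrac{2w^2\log(n/\epsilon)}{\epsilon^2}\}$ that the solver already assumes; once this is in place, the pieces assemble immediately.
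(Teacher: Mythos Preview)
Your proof is correct and essentially identical to the paper's: decompose into $\E[\vone^T\bar{y}]$ plus the expected fixing cost, bound the former by Lemma~\ref{lemma:bary}, and bound the latter by showing each constraint triggers a fix with probability at most $\epsilon/n$ via Lemma~\ref{lemma:martingale}. The only cosmetic differences are that the paper bounds each per-constraint fix by $1$ directly from the normalization $\maxAi\geq 1$ (rather than your feasible-point argument for $\sum_i 1/\maxAi\leq n\opt$), and note two small slips---the covering-constraint index ranges over $[n]$, not $[m]$, and you only have $T\lambda_i\geq U_i$ (not equality, because of the $\min$ truncation)---though the implication $\{\lambda_i\leq -2\epsilon\}\subseteq\{S_i-U_i\geq \epsilon T\}$ you need still holds.
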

\begin{proof}
When $S_i-U_i\leq \epsilon T$, we have 
\[
(A^T\bar{y})_i-1+\epsilon\geq \frac{1}{T}(S_i-\epsilon T)\geq -\frac{w\log 2n}{\alpha T}-\epsilon\geq -2\epsilon  ,
\]
and so we don't need to fix the $i$-th constraint. 
When that is not the case, since $(A^T\bar{y})_i\geq 0$, and $\maxAi\geq 1$, we need to add at most $1$ to some variable $\bar{y}'_j$ to fix the $i$-th covering constraint. 
For all the $n$ covering constraints, we add on expectation at most $n\frac{\epsilon}{n}\leq \epsilon\leq \epsilon\opt$ to $\bar{y}$ to get $\bar{y}'$. 
Together with Lemma~\ref{lemma:bary}, we have $\E[\vone^T\bar{y'}]\leq (1+6\epsilon)\opt$.
\end{proof}
We complete the proof of Theorem~\ref{thm:dual} by noticing $(A^T\bar{y}')_i\geq 1-3\epsilon$, for all $i$.
Thus, the output of Algorithm~\ref{alg:fix}, $\frac{\bar{y}'}{1-3\epsilon}$, satisfies the properties stated in Theorem~\ref{thm:dual}.

\vspace{0.4in}
\noindent
\textbf{Acknowledgments.}
DW was supported by ARO Grant W911NF-12-1-0541 and NSF Grant CCF-1528174, SR was funded by NSF Grant CCF-1528174, and MWM acknowledges the support of the NSF, AFOSR, and DARPA.
\vspace{0.1in}

\begin{appendices}

\section{Missing Proofs}
\label{App:proofs}
The following proofs can be found in~\cite{AO15-parallel}, and we include them here for completeness.
\optrange*
\begin{proof}
By the assumption $\min_{i\in [n]}\maxAi=1$, we know at least one variable has all coefficients at most $1$, so we can just set that variable to $1$, which gives $\opt\geq 1$. On the other hand, since each variable has a coefficient of $1$ in some constraint, no variable can be larger than $1$, thus $\opt\leq n$.
\end{proof}
\smoothing*
\begin{proof}
We establish each result in turn.
\begin{enumerate}
\item Since $Ax^*\leq \vone$, and $u^*=(1-\epsilon/2)x^*$, we have $(Au^*)_j-1\leq -\epsilon/2$ for all $j$. Then $p_j(u^*)\leq \exp(-\frac{1}{\mu}\frac{\epsilon}{2})=(\frac{\epsilon}{mn})^2$, and $f_\mu(u^*)=-\vone^Tu^*+\mu\sum_{j=1}^mp_j(u^*)\leq -(1-\epsilon/2)\opt+\mu m (\frac{\epsilon}{mn})^2\leq -(1-\epsilon)\opt$.
\item By contradiction, suppose $f_\mu(x)<-(1+\epsilon)\opt$, since $f_\mu(x)>-\vone^Tx$, we must have $\vone^Tx>(1+\epsilon)\opt$. Suppose $\vone^Tx=(1+v)\opt$ for some $v>\epsilon$. There must exits a $j$, such that $(Ax)_j>1+v$. Then we have $p_j(x)>\exp(v/\mu)=((\frac{mn}{\epsilon})^4)^{v/\epsilon}$, which implies 
\[
f_\mu(x)\geq -(1+v)\opt + \mu p_j(x)\geq \frac{\epsilon}{4\log(mn/\epsilon)}((\frac{mn}{\epsilon})^4)^{v/\epsilon}-(1+v)\opt>0  ,
\]
since $\opt\leq n$, and $v>\epsilon$. This gives a contradiction.
\item The $x_0$ we use satisfies $Ax_0-\vone\leq -\epsilon/2 - \vone$, thus
\[
f_\mu(x_0)=\mu\sum_j p_j(x_0)-\vone^Tx_0\leq \frac{\mu m}{(nm)^2}-\frac{1-\epsilon/2}{n}\leq -\frac{1-\epsilon}{n}  .
\]
\item By contradiction, suppose there is some $j$ such that $(Ax)_j-1\geq \epsilon$. Let $v>\epsilon$ be the smallest $v$ such that  $Ax\leq (1+v)\opt$, and denote $j$ the constraint that has $(Ax)_j-1=v$. We must have $\vone^Tx\leq (1+v)\opt$ by definition of $\opt$. Then
\[
f_\mu(x)\geq \mu p_j(x)-(1+v)\opt\geq \frac{\epsilon}{4\log(mn/\epsilon)}((\frac{mn}{\epsilon})^4)^{v/\epsilon}-(1+v)\opt>0,
\]
which gives a contradiction.
\item By the above part, $f_\mu(x)\leq -(1-O(\epsilon))\opt\leq 0$ suggests $\frac{x}{1+\epsilon}$ is feasible. Furthermore, $-\vone^Tx < f_\mu(x) \leq -(1-O(\epsilon))\opt$ gives $\vone^Tx\geq (1-O(\epsilon))\opt$, thus $\vone^T\frac{x}{1+\epsilon} \geq (1-O(\epsilon))\opt$ is approximately optimal.
\item This is by straightforward computation.
\end{enumerate}
\end{proof}

\mirrorstep*

\begin{proof}
Since the function $V_{x^{(k)}}(z)$, the dot product and the constraint $z\geq 0$ are all coordinate-wise separable, we look at each coordinate independently. 
Thus we only need to check
\[
x_{k+1}^{(t)}[i]=\argmin_{z[i]\geq 0}\{(z[i]\log\frac{z[i]}{x_k[i]}+x_k[i]-z[i])+\alpha \xi_{k}^{(t)}[i](z[i]-x_k[i]) \} .
\]
This univariate function being optimized is convex and has a unique minimizer. We find it by taking the derivative to get
\[
\log\frac{z[i]}{x_k[i]}+\alpha \xi_{k}^{(t)}[i] = 0   ,
\]
which gives $x_{k+1}^{(t)}[i]\myeq z[i]=x_k[i]\exp(-\alpha \xi_{k}^{(t)}[i])$.
\end{proof}

\end{appendices}

\bibliographystyle{plain}
\bibliography{PP}

\end{document}